\crefname{section}{§}{§§}
\Crefname{section}{§}{§§}
\definecolor{darkgrey}{RGB}{70,70,70}
\definecolor{lightgrey}{RGB}{200,200,200}
\bfseries\color{black!400!black},
\algnewcommand{\LeftComment}[1]{\State \(\triangleright\) #1}
\newcommand{\macb}[1]{\textbf{\textsf{#1}}}
\newcommand\rwh[1]{%
\savestack{\tmpbox}{\stretchto{%
  \scaleto{%
      \scalerel*[\widthof{\ensuremath{#1}}]{\kern-.6pt\bigwedge\kern-.6pt}%
          {\rule[-\textheight/2]{1ex}{\textheight}}
            }{\textheight}%
}{0.5ex}}%
\stackon[1pt]{#1}{\tmpbox}%
}
\def\HiLiGA{\leavevmode\rlap{\hbox to \hsize{\color{black!10}\leaders\hrule height 1\baselineskip depth 1ex\hfill}}}
\def\HiLiGB{\leavevmode\rlap{\hbox to \hsize{\color{black!25}\leaders\hrule height 1\baselineskip depth 1ex\hfill}}}
\def\HiLiGC{\leavevmode\rlap{\hbox to \hsize{\color{black!40}\leaders\hrule height 1\baselineskip depth 1ex\hfill}}}
\def\HiLiGD{\leavevmode\rlap{\hbox to \hsize{\color{black!55}\leaders\hrule height 1\baselineskip depth 1ex\hfill}}}
\def\HiLiGE{\leavevmode\rlap{\hbox to \hsize{\color{black!70}\leaders\hrule height 1\baselineskip depth 1ex\hfill}}}
\def\HiLiGF{\leavevmode\rlap{\hbox to \hsize{\color{black!85}\leaders\hrule height 1\baselineskip depth 1ex\hfill}}}
\crefname{section}{§}{§§}
\Crefname{section}{§}{§§}
\DeclareMathOperator*{\nnz}{nnz}
\DeclareMathOperator*{\lcm}{lcm}
\DeclareMathOperator*{\flops}{flops}
\definecolor{mygreen}{rgb}{0,0.2,0}
\definecolor{mygray}{rgb}{0.5,0.5,0.5}
\definecolor{mymauve}{rgb}{0.58,0,0.82}
\definecolor{mypurple}{rgb}{0.38,0,0.32}
\definecolor{myblue}{rgb}{0.1,0,0.32}
\newcommand{\costyle}{\fontsize{8.5pt}{9pt}\ttfamily\bfseries}
\newcommand{\kwstyle}{\costyle\textcolor{myblue}}
\newcommand{\CD}{\small\ttfamily\bfseries}
\newcommand{\bhead}[1]{\noindent \textbf{#1:}}
\tiny\color{mygray}, 
\newcommand{\opbr}{\otimes}
\newcommand{\wA}{A}
\newcommand{\opMM}[2]{\bullet_{\langle #1,#2\rangle}}
\newcommand{\MM}[4]{#3\opMM{#1}{#2}#4}
\algrenewcommand\algorithmicrequire{\textbf{Input}}
\algrenewcommand\algorithmicensure{\textbf{Output}}
\renewcommand{\LeftComment}[1]{\Statex\hskip\ALG@thistlm {$\triangledown$\ {\color{darkgray}#1}}\ $\triangledown$}
\algnewcommand{\IIf}[1]{\State\algorithmicif\ #1}
\algnewcommand{\IThen}{\algorithmicthen\ }
\algnewcommand{\EndIIf}{\unskip}
\algnewcommand{\IElse}{\State\algorithmicelse\ \unskip\ }
\newcommand*{\affmark}[1][*]{\textsuperscript{#1}}
\begin{document}

\title{Scaling Betweenness Centrality using Communication-Efficient Sparse Matrix Multiplication}

\author{ Edgar Solomonik\affmark[1], Maciej Besta\affmark[2], Flavio Vella\affmark[3], Torsten Hoefler\affmark[2]\\
 {\normalsize\affmark[1]Department of Computer Science, University of Illinois at Urbana-Champaign}
\\ {\normalsize\affmark[2]Department of Computer Science, ETH Zurich}
\\ {\normalsize\affmark[3]Sapienza University of Rome}
\\ {\normalsize solomon2@illinois.edu, maciej.besta@inf.ethz.ch, vella@di.uniroma1.it, htor@inf.ethz.ch} }

\copyrightyear{2017}
\acmYear{2017}
\setcopyright{acmlicensed}
\acmConference{SC17}{November 12--17, 2017}{Denver, CO, USA}\acmPrice{15.00}\acmDOI{10.1145/3126908.3126971}
\acmISBN{978-1-4503-5114-0/17/11}

\begin{abstract}
Betweenness centrality (BC) is a crucial graph problem that measures the
significance of a vertex by the number of shortest paths leading through it.
We propose Maximal Frontier Betweenness Centrality (MFBC): a succinct BC
algorithm based on novel sparse matrix multiplication routines that performs a
factor of $p^{1/3}$ less communication on $p$ processors than the best known
alternatives, for graphs with $n$ vertices and average degree $k=n/p^{2/3}$. We
formulate, implement, and prove the correctness of MFBC for weighted graphs by
leveraging monoids instead of semirings, which enables a surprisingly succinct
formulation.
MFBC scales well for both extremely sparse and relatively dense graphs. It
automatically searches a space of distributed data decompositions and sparse
matrix multiplication algorithms for the most advantageous configuration. The
MFBC implementation outperforms the well-known CombBLAS library by up to 8x and
shows more robust performance.  
Our design methodology is readily extensible to other graph problems.

\end{abstract}

\maketitle

\section{Introduction}
\label{sec:intro}

Graph processing underlies many computational problems in machine learning,
computational science, and other
disciplines~\cite{DBLP:journals/ppl/LumsdaineGHB07}.  Yet, many parallel graph
algorithms struggle to achieve scalability due to irregular communication
patterns, high synchronization costs, and lack of spatial or temporal cache locality. 
To alleviate this, we pursue the methodology of formulating scalable graph
algorithms via sparse linear algebra primitives.

In this paper, we focus on betweenness centrality (BC), an important graph
problem that measures the significance of a vertex $v$ based on the number of
shortest paths leading through $v$. BC is used in analyzing various networks in
biology, transportation, and terrorism
prevention~\cite{bader2007approximating}.  The Brandes BC
algorithm~\cite{prountzos2013betweenness, brandes2001faster} provides a
work-efficient way to obtain centrality scores without needing to store all
shortest-paths simultaneously, achieving a quintessential reduction in the
memory footprint.  To date, most parallelizations of BC have leveraged the
breadth first search (BFS) primitive~\cite{deng2009taming, tan2009parallel,
tan2011analysis, Bader:2006:PAE:1156433.1157604, madduri2009faster,
green2013faster, mclaughlin2014scalable}, which can be used to implement
Brandes algorithm on unweighted graphs~\cite{kepner2011graph}.

Contrarily to these schemes, we propose to use the Bellman-Ford shortest path
algorithm~\cite{bellman1956routing, ford1956network} to compute shortest
distances and multiplicities. This enables us to achieve maximal parallelism by
simultaneously propagating centrality scores from all vertices that have
determined their final score, starting with the leaves of the shortest path
tree. We refer to this approach as the Maximal Frontier Betweenness Centrality
(MFBC) algorithm, because the frontier of vertices whose edges are relaxed
includes \emph{all} vertices that could yield progress.
We prove the correctness of the new scheme for computing shortest distances,
multiplicities, and centrality scores with a succinct path-based argument
applied to factors of partial centrality scores, simplifying 
the Brandes approach~\cite{brandes2001faster}.

Each set of frontier relaxations in MFBC is done via multiplication of a pair
of sparse matrices. These multiplications are defined to perform the desired
relaxation via the use of monoids, monoid actions, and auxiliary functions.
This algebraic formalism enables concise definition as well as implementation
of MFBC via the Cyclops Tensor Framework (CTF)~\cite{2015arXiv151200066S}. CTF is a
distributed-memory library that supports tensor contraction and summation;
operations that generalize the sparse matrix multiplications MFBC requires.
By implementing a robust set of sparse matrix multiplication algorithms that
are provably communication-efficient, our framework allows rapid implementation
of bulk synchronous graph algorithms with regular communication patterns and
low synchronization cost.

Aside from the need for transposition (data-reordering), sparse tensor
contractions are equivalent to sparse matrix multiplication. We present a
communication cost analysis of the sparse matrix multiplications we developed
within CTF, and use it to derive the total MFBC cost. The theoretical
scalability both of the sparse matrix multiplication, as well as of the MFBC
algorithm surpasses the state-of-the-art. In our evaluation, the new algorithm
obtains excellent strong and weak scaling for both synthetic and real-world
power-law graphs. We compare our implementation to
that of the Combinatorial BLAS (CombBLAS) library~\cite{bulucc2011combinatorial}, a
state-of-the-art matrix-based distributed-memory betweenness centrality code.
We demonstrate that the CTF-MFBC code outperforms CombBLAS by up to 8x and measure a reduction
in the runtime communication and synchronization costs.
While CombBLAS is still faster in some cases, our implementation is general to weighted graphs
and attains more consistent performance across different graphs.

%

\section{Notation, Background, Concepts}
\label{sec:bgrnd}
We first introduce the notation and basic concepts. The most important
employed symbols are summarized in Table~\ref{tab:symbols}.

\subsection{Basic Graph Notation}

We start by presenting the used graph notation.  We represent an undirected
unweighted labeled graph $G$ as a tuple $(V,E)$; $V=1:n$ ($V = \{1,\ldots, n\}$) is a
set of vertices and $E \subseteq V \times V$ is a set of edges; $|V|=n$ and
$|E|=m$.  We denote the set of possible weight values as $\mathbb{W}\subset
\mathbb{R}\cup \{\infty\}$.  If $G$ is weighted, we have $G = (V,E,w)$ where
$w: E \to \mathbb{W}$ is a weight function.  We denote the adjacency matrix of
$G$ as $\wA$; $\wA(i,j)=w(i,j)$ if $(i,j)\in E$, otherwise $\wA(i,j)=\infty$. 

\begin{table}
\centering
\footnotesize
\sf
\begin{tabular}{@{}l|ll@{}}
\toprule
\multirow{6}{*}{\begin{turn}{90}\shortstack{{Graph}\\{structure}}\end{turn}} & $G$&a given graph, $G=(V,E)$; $V$ and $E$ are sets of vertices and edges;\\
                   & & if $G$ is weighted, then $G=(V,E,w)$ where $w : E \to \mathbb{W}$.\\
                   & $n,m$&numbers of vertices and edges in $G$; $n = |V|, m = |E|$.\\
                   & $\rho(v)$&degree of $v$; $\overline{\rho}$ and $\rwh{\rho}$ are the average and maximum degree in $G$.\\
                   & $d$& diameter of a given graph $G$.\\ 
                   & $A$& adjacency matrix of $G$.\\
                   \midrule
\multirow{6}{*}{\begin{turn}{90}\shortstack{General BC}\end{turn}} 
                   & $\lambda(v)$ & betweenness centrality of $v$.\\ 
                   & $\tau(s,t)$&shortest path distance between $s,t$.\\
                   & $\bar\sigma(s,t)$& number of shortest paths between $s,t$.\\
                   & $\sigma(s,t,v)$& number of shortest paths between $s,t$ leading via $v$.\\
                   & $\delta(s,v)$ &  dependency of $s$ on $v$; $\delta(s,v) = \sum_{t \in V} {\sigma(s,t,v)}/{\bar\sigma(s,t)}$.\\
                   & $\pi(s,v)$ & set of immediate predecessors of $v$ in shortest paths from $s$ to $v$.\\
\bottomrule
\end{tabular}
\caption{Symbols used in the paper; $v,s,t \in V$ are vertices.}
\label{tab:symbols}
\end{table}

\subsection{Basic Algebraic Structures}

We use heavily two structures: semirings and monoids.

\macb{Monoids}
A monoid $(S, \oplus)$ is a set $S$ closed under an associative binary operation
$\oplus$ with an identity element. A commutative monoid $(S,\oplus)$ is a
monoid where $\oplus$ is commutative, and

\small
\begin{gather*}
\bigoplus_{i=j}^k s(i) = s(j)\oplus s(j+1)\oplus \cdots \oplus s(k)
\end{gather*}
\normalsize
for any $k\geq j$ with $s(i)\in S$ for each $i\in j:k$.  We denote the
elementwise application of a monoid operator to a pair of matrices as
$A\oplus B$ for any $A,B\in S^{m\times n}$.

\macb{Semirings}
%
%
A semiring is defined as a tuple $S = (T, \oplus, \otimes)$; $T$ is a set equipped
with two binary operations $\oplus$, $\otimes$ such that $(T, \oplus)$ is a
commutative monoid and $(T, \otimes)$ is a monoid.
Semirings are often used to develop graph algorithms based on linear algebra
primitives. In this work, we use monoids to enable a succinct formulation.

\subsection{Algebraic Graph Algorithms}

Most graph algorithms can be expressed via matrix-vector or matrix-matrix
products. As an introductory example, we consider
BFS~\cite{Cormen:2001:IA:580470}. BFS starts at a root vertex $r$ and traverses
all nodes connected to $r$ by one edge, then the set of nodes two edges away
from $r$, etc. BFS can be used to compute shortest paths in an unweighted
graph, which we can represent by an adjacency matrix with elements
$A_{ij}\in\{1,\infty\}$. In this case, BFS would visit each vertex $v$ and
derive its distance $\tau(r,v)$ from the root vertex $r$.

Algebraically, BFS can be expressed as iterative multiplication of the sparse
adjacency matrix $A$ with a sparse vector $x_i$ over the tropical semiring,
($i$ denotes the iteration number). The tropical semiring is a commutative
monoid $(\mathbb{W}, \min)$ combined with the addition operator (replacing the
monoid $(\mathbb{R},+)$ and multiplication operator that are usually used
for the matrix-vector product). The BFS algorithm would initialize ${x}_0^r
= 0$ (the initial distance to $r$ is $0$) and any other element of ${x}_0$ is
$\infty$ (i.e., the initial distance to any other element is $\infty$).  Each
BFS iteration computes $x_{i+1}=\MM{\min}{+}{x_i}{A}$, then screens $x_{i+1}$
retaining only elements that were $\infty$ in all $x_j$ for $j<i$.  The
sparsity of the vector is given by all entries which are not equal to $\infty$,
the additive identity of the tropical semiring.

\subsection{Brandes' Algorithm}
\label{sec:bc-spmm}

BC derives the importance of a vertex $v$ using the number of shortest paths
that pass through $v$. Let $\bar\sigma(s,t)$ be the number of shortest paths
between vertices $s,t$, and let $\sigma(s,t,v)$ be the number of such paths
that pass via $v$, $\sigma(s,s,v)=\sigma(s,t,s)=\sigma(s,t,t)=0$.  The
centrality score of $v$ is defined as $\lambda(v) = \sum_{s,t \in V}
\frac{\sigma(s,t,v)}{\bar\sigma(s,t)}$.  Define the dependency of a source
vertex $s$ on $v$ as: $\delta(s,v) = \sum_{t \in V}
\frac{\sigma(s,t,v)}{\bar\sigma(s,t)}$. Then, $\lambda(v) = \sum_{s  \in V}
\delta(s,v)$ where $\delta(s,v)$ satisfies the recurrence: $\delta(s,v) =
\sum_{w\in \pi(s,w)} \frac{\bar\sigma(s,v)}{\bar\sigma(s,w)} (1 +
\delta(s,w))$; $\pi(s,w)$ is a list of immediate predecessors of $w$ in the
shortest paths from $s$ to $w$.  Brandes' scheme~\cite{brandes2001faster} uses
this recurrence to compute $\lambda(v)$ in two parts. First, multiple
concurrent BFS traversals compute $\pi(s,v)$ and $\bar\sigma(s,v)$,
$\forall{s,v \in V}$, obtaining a predecessor tree over $G$. Second, the tree
is traversed backwards (from the highest to the lowest distance) to compute
$\delta(s,v)$ and $\lambda(v)$ based on the equations above.

Brandes' algorithm has been a subject of many previous efficiency studies
~\cite{jin2010novel,yang2005parallel,Bader:2006:PAE:1156433.1157604,madduri2009faster,prountzos2013betweenness,Kulkarni:2007:OPR:1250734.1250759,green2013faster,tan2009parallel,DBLP:journals/corr/abs-0809-1906,cong2012optimizing,tan2011analysis,shi2011fast,yang2011fast,sariyuce2013betweenness}.
Some efforts considered distributed memory
parallelization~\cite{edmonds2010space,Bernaschi2016,fan2017gpu}.  The only distributed
memory BC implementation done using matrix primitives we are aware of exists in
CombBLAS~\cite{bulucc2011combinatorial}.  To the best of our knowledge, we
provide the first communication cost analysis of a BC algorithm, and the first
implementation leveraging 3D sparse matrix multiplication.  A mix of graph
replication and blocking have been previously used for BC
computation~\cite{Bernaschi2016}, but the communication complexity of the
scheme was not analyzed.  Furthermore, previous parallel codes and algebraic BC
formulations have largely been limited to unweighted graphs.

\section{Monoids for Succinct Formulation}
\label{sec:monoids}
%
Our first idea is to employ commutative {monoids} for describing MFBC.
Semirings permit multiplicative operators only on elements within the same set,
while our algorithms require operators on members of different sets. We use
  monoids to define generalized matrix-vector and matrix-matrix multiplication
  operators.

%
%
%

To define a suitable matrix multiplication primitive for our algorithms, we
permit different domains for the matrices and replace elementwise multiplication
with an arbitrary function that is a suitable map between the domains. 
Specifically, consider two input matrices $A\in D_A^{m\times k}$ and $B\in
D_B^{k\times n}$, a bivariate function $f : D_A\times D_B \to D_C$,
and a commutative monoid $(D_C,\oplus)$.
Then, we denote matrix multiplication (MM) as $C=\MM{\oplus}{f}{A}{B}$, where each element 
of the output matrix, $C \in D_C^{m \times n}$, is
$C(i,j)=\bigoplus_{k=1}^nf(A(i,k), B(k,j))$, $\forall i\in 1:m, j\in 1:n$.
This MM notation enables a unified description of the main steps of
MFBC.

\section{Maximal Frontier Algorithm}
\label{sec:bc-new-alg}

We now describe our algebraic maximal frontier BC algorithm (MFBC), which uses
the introduced algebraic formulation based on monoids for a succinct
description.
MFBC consists of two parts.  First, it enhances Bellman-Ford to compute
distances between vertices ($\tau(s,t)$) and the multiplicities of shortest
paths ($\bar\sigma(s,t)$); we refer to this part as Maximal Frontier Bellman
Ford (MFBF; see Section~\ref{sec:mfbf}). Second, it computes partial
centrality factors ($\delta(s,v)$) with a strategy
that extends Brandes' algorithm; we thus refer to it as Maximal Frontier
Brandes (MFBr; see Section~\ref{sec:mfbr}). 

Both MFBF and MFBr use generalized matrix products of: (1) the adjacency matrix
$A$, and (2) a sparse rectangular $n\times n_b$ matrix with elements of one of
the two specially defined monoids: \emph{multpaths} (elements from the
\emph{multpath monoid} associated with MFBF) and \emph{centpaths} (elements
from the \emph{centpath monoid} associated with MFBr).
Here, $n_b$ is the \emph{batch size}: the number of vertices for which we
solve the final BC score $\lambda(v)$. It constitutes a tradeoff between
the time and the storage complexity: MFBC takes $n / n_b$ iterations but 
must maintain an $n \times n_b$ matrix.
Finally, these two combined schemes give MFBC (i.e., informally MFBC = MFBF +
MFBr).

%
%

\subsection{MFBF: Computing Shortest Paths}
\label{sec:mfbf}

The core idea behind MFBF is to use and extend 
Bellman-Ford so that it computes not only shortest paths but also
their multiplicities.
To achieve this, rather than working only with weights, we define MFBF in
terms of multpaths: tuples (that belong to the multpath monoid) which
carry both path weight and multiplicity.

\subsubsection{The Multpath Monoid}

\leavevmode\\
\bhead{Intuition}
To express Bellman-Ford with multiplicities algebraically, MFBF uses \emph{the multpath monoid} $(\mathbb{M},
\oplus)$.  The elements of $\mathbb{M}$ are \emph{multpaths}: tuples that model
a weighted path with a multiplicity.  The $\oplus$ operator acts on any two
multpaths $x$ and $y$ and returns the one with lower weight; if the weights of
$x$ and $y$ are equal, then their multiplicities are summed.

\bhead{Formalism}
A multpath $x=(x.w,x.m)\in \mathbb{M}=\mathbb{W}\times \mathbb{N}$ is a path in
$G$ with a weight $x.w$ and a multiplicity $x.m$.  Then, we have
{
\begin{gather*}
\forall_{x,y \in \mathbb{M}}, \quad 
\small
x\oplus y=
  \begin{cases} 
    x & : x.w<y.w \\
    y & : x.w>y.w \\
    (x.w,x.m+y.m) & : x.w=y.w. 
  \end{cases}
\end{gather*}
\normalsize
}


\subsubsection{The Bellman-Ford Action}
\label{sec:bfact}

\leavevmode\\
\bhead{Intuition}
In each MFBF iteration, we multiply $A$ with a sparse tensor ($\mathcal{T}$)
that constitutes the \emph{multpath frontier}: it contains the multpaths of the
nodes whose multiplicity changed in the previous iteration.
Here, each element-wise operation acts on a multpath (an element from
$\mathcal{T}$) and an edge weight (an element from $A$); we refer to this
operation as the \emph{Bellman-Ford Action}.


\bhead{Formalism}
Our MFBF algorithm (Algorithm~\ref{alg:MFBF}) iteratively updates a matrix $T$
of multpaths via forward traversals from each source vertex.  This is done in
the inner loop (lines~\ref{li:part_1_inner_start}-\ref{li:part_1_inner_end})
where the $\mathcal{T}$ tensor with partial multiplicity scores is updated in
each iteration using Bellman-Ford Action $\opMM{\oplus}{f}$, where the function
$f$ is defined as
\small
\begin{gather*}
f : \mathbb{M}\times \mathbb{W} \to \mathbb{M}, \quad\quad
f(a,w) = (a.w+w,a.m).
\end{gather*}
\normalsize
$f$ is interpreted as an action of the monoid $(\mathbb{W},+)$ on the set $\mathbb{M}$.
This concept generalizes to $n\times k$ matrices, where we have a monoid action 
$\opMM{\oplus}{f}$ with monoid $(\mathbb{W}^{n\times n}, \opMM{\min}{+})$ on the set $\mathbb{M}^{n\times k}$.

\subsubsection{Algorithm and Correctness}

\leavevmode\\
\bhead{Intuition}
We obtain shortest path distances and multiplicities via MFBF (Algorithm~\ref{alg:MFBF}):
a Bellman-Ford variant that relaxes all edges adjacent to vertices whose path
information changed in the previous iteration.
The edge relaxation is done via matrix multiplication of the adjacency matrix 
and a multpath matrix, which appends edges to the existing frontier of 
vertices via function $f$, then uses the multpath operator $\oplus$ to
select the minimum distance new set of paths, along with the number of such
new paths. This partial multiplicity score is subsequently accumulated to $T$ 
if it corresponds to a minimum distance path from the starting vertex.
Note that the multiplicity is set to 1 (line~\ref{li:part_1_start}) even if the
corresponding weight equals $\infty$. Thus, such edges are considered in the
main loop (line~\ref{li:part_1_inner_start}). When a path to $v$ with a finite
distance is found, it replaces such a multiplicity.

\begin{algorithm}[!]\scriptsize\footnotesize
\caption{$[T]=$\ MFBF$(A,\vec{s})$}
\algrenewcommand\algorithmicindent{0.8em}
\footnotesize
\begin{algorithmic}[1]
  \Require $\wA$: $n\times n$ adjacency matrix, $\vec{s}$: list of starting $n_b$ vertices
  \Ensure $T$: multpath matrix of distances and multiplicities from vertices $\vec{s}$
   \LeftComment{{
Existential qualifiers $\forall s\in 1:n_b$ (denoting starting vertices) and $\forall v\in 1:n$ (denoting destination vertices) are implicit.}}
    \State $T(s,v) := (\wA(\vec{s}(s),v),1)$ \Comment{Initialize multpaths.} \label{li:part_1_start}
    \State $\mathcal{T} := T$ \Comment{Initialize multpath frontier}
    \While{$\mathcal{T}(s,v)\neq (\infty,0)$ for some $s,v$} \label{li:abws:for1}\label{li:part_1_inner_start}
      \State $\mathcal{T} := \MM{\oplus}{f}{\mathcal{T}}{\wA}$ \Comment{Explore nodes adjacent to frontier}
      \State $T := T\oplus \mathcal{T}$ \Comment{Accumulate multiplicities}
      \LeftComment{Determine new frontier based on updates to the vertex path information}
      \IIf {$\mathcal{T}(s,v).m {=} 0 \vee \mathcal{T}(s,v).w {>} T(s,v).w$} 
      \IThen $\mathcal{T}(s,v) {:=} (\infty,0)$ \EndIIf
    \EndWhile \label{li:part_1_end} \label{li:part_1_inner_end}
\end{algorithmic}
\label{alg:MFBF}
\end{algorithm}
\normalsize

\bhead{Formalism} We can prove that Algorithm~\ref{alg:MFBF} will output the correct shortest distances and multiplicites.
\begin{lemma}\label{lem:mfbf}
For any adjacency matrix $\wA$ and vertex set $\vec{s}$, $T=$\ MFBF$(\wA, \vec{s})$ 
satisfies $T(s,v)=(\tau(s,v),\bar\sigma(s,v))$.
\end{lemma}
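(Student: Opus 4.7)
The plan is to prove the lemma by induction on the iteration count $i$ of the main while loop, maintaining the invariant that after iteration $i$, $T(s,v) = (\tau_i(s,v), \bar\sigma_i(s,v))$, where $\tau_i(s,v)$ denotes the minimum weight over walks from $\vec{s}(s)$ to $v$ of length between $1$ and $i+1$ edges, and $\bar\sigma_i(s,v)$ denotes the number of such walks achieving this minimum. Under the standard assumption that weights in $\mathbb{W}$ are non-negative with no zero-weight cycles, minimum-weight walks coincide with shortest paths, so this invariant suffices to establish the claim. The base case $i=0$ follows directly from the initialization $T(s,v) := (\wA(\vec{s}(s), v), 1)$, which encodes all single-edge walks exactly.

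For the inductive step, I would unfold the update rules in terms of the Bellman-Ford action $f(a, w) = (a.w + w, a.m)$. The product $\MM{\oplus}{f}{\mathcal{T}}{\wA}$ extends each walk represented in the frontier by exactly one edge, and the commutative monoid operator $\oplus$ then selects minimum-weight extensions while summing multiplicities of equal-weight ones. The accumulation $T := T \oplus \mathcal{T}$ thereby produces the correct multpath for walks of length up to $i+2$, provided the frontier $\mathcal{T}$ carries enough information to reconstruct all such minimum-weight walks. This is where subpath optimality enters: any minimum-weight walk of length $i+2$ from $\vec{s}(s)$ to $v$ has its first $i+1$ edges forming a minimum-weight walk from $\vec{s}(s)$ to its penultimate vertex $u$, whose multpath information is captured in $T(s,u)$ by the inductive hypothesis.

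I expect the main obstacle to be justifying that the frontier-thinning step---which sets $\mathcal{T}(s,v) := (\infty, 0)$ whenever $\mathcal{T}(s,v).m = 0$ or $\mathcal{T}(s,v).w > T(s,v).w$---does not discard information needed in subsequent iterations. To address this, I would prove a companion invariant: if $T(s,u)$ was not strictly improved in iteration $i$ (so $u$ is pruned from the frontier), then every minimum-weight walk from $\vec{s}(s)$ through $u$ to any neighbor was already enumerated in an earlier iteration, so pruning is safe. This relies on monotonicity of walk weights under edge appending, which itself follows from non-negativity of weights. Termination then follows because each iteration with nonzero $\mathcal{T}$ strictly improves some $T(s,v)$ (either by decreasing its weight or by increasing its multiplicity at the current minimum weight), and such improvements are bounded in total by $O(n \cdot n_b)$; at termination, no shortest walks are missed, yielding $T(s,v) = (\tau(s,v), \bar\sigma(s,v))$.
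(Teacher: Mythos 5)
Your proposal follows essentially the same route as the paper's proof: induction on the loop iteration with the invariant that after $j$ iterations $T(s,v)$ records the minimum weight and multiplicity over walks of at most $j+1$ edges, with the inductive step unfolding $\opMM{\oplus}{f}$ as one-edge extension followed by min-selection and multiplicity summation under $\oplus$. If anything, your explicit handling of the frontier-thinning step and of the non-negativity/no-zero-cycle assumption is more careful than the paper's argument, which asserts $\mathcal{T}_j(s,v)=\hat{h}_j(s,v)$ without separately justifying that pruning discards no information needed later.
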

\begin{proof}
Let the maximum number of edges in any shortest path from node $s$ be $d$.
For $j\in 1:d, v\in V\setminus\{s\}$, let each $h_j(s,v)\in \mathbb{M}$ be a multpath
corresponding to the weight and multiplicity of all shortest paths from vertex
$s$ to vertex $v$ containing {\it up to} $j$ edges (if there are no such paths,
$h_j(s,v)=(\infty,0$).  Further, let each
$\hat{h}_j(s,v)\in \mathbb{M}$ be a multpath corresponding to the weight and
multiplicity of all shortest paths from vertex $s$ to vertex $v$ containing
{\it exactly} $j$ edges (if there are no such paths,
$\hat{h}_j(s,v)=(\infty,0)$). Note that $h_{d}(s,v)$ contains the weight and
multiplicity of all shortest paths from vertex $s$ to vertex $v$, since no
shortest path can contain more than $d$ edges, therefore $h_{d}(s,v)=(\tau(s,v),\bar\sigma(s,v))$.
Further, by the definition of $\oplus$, we have $h_j(s,v)=\bigoplus_{q=1}^j \hat{h}_j(s,v)$.

Let $T_j(s,v)$ be the state of $T(s,v)$ 
after the completion of $j-1$
iterations of the loop from line~\ref{li:abws:for1}.  We show by induction on $j$
that $T_j(s,v)=h_{j}(s,v)$ and $\mathcal{T}_j(s,v)=\hat{h}_j(s,v)$, and subsequently
that after $d-1$ loop iterations, $T(s,v)=T_{d}(s,v)=h_{d}(s,v)$.
For $j=1$, no iterations have completed and we have
$T(s,v)=(\wA(s,v),1)$, as desired.  
For the inductive step, we show that given
$T_j(s,v)=h_{j}(s,v)$, one iteration of the loop on line~\ref{li:abws:for1}
yields $T_{j+1}(s,v)=h_{j+1}(v)$.
We note that by definition of $\oplus$, only paths with a minimal weight
$\mathcal{T}_{j}(s,u).w$ contribute to $\mathcal{T}_{j+1}(s,v)$, and
(again by definition of $\oplus$),
\small
\begin{gather*}
\mathcal{T}_{j+1}(s,v).m=\sum_{w \in P} w.m,\ \ \text{where} \\ 
P = \{\mathcal{T}_j(s,u) : \mathcal{T}_j(s,u).w+\wA(u,v) = \mathcal{T}_{j+1}(s,v).w\}, 
\end{gather*}
\normalsize
i.e., $\mathcal{T}_{j+1}(s,v).m$ is the sum of the
multiplicities of all the minimal weight paths from vertex $s$ to
$v$  consisting of $j+1$ edges. Our expression for $P$ is valid,
since each must consist of a minimal weight path of $k$ edges from vertex $s$
to some vertex $u$, which is given by $\mathcal{T}_{j}(s,u)$ and another edge
$(u,v)$ with weight $\wA(u,v)$.
\end{proof}

\subsection{MFBr: Computing Centrality Scores}
\label{sec:mfbr}

Once we have obtained the distances and multiplicities of shortest paths
from a set of starting vertices via MFBF, we can begin computing the partial centrality scores.
We perform this by traversing the shortest-path tree from the leaves to the root.
This time, the maximal frontier is composed of all vertices whose leaves have just reported their centrality scores.

\subsubsection{Centpath Monoid}

\leavevmode\\
\bhead{Intuition} To propagate partial centrality scores, we use centpaths, which store a 
distance, a contribution to the centrality score, and a counter.
Similarly to the multpath monoid, we define a centpath monoid $(\mathbb{C}, \opbr)$ with an operator
that acts on any two centpaths $x$ and $y$, and returns the one with lower weight;
if the weights of $x$ and $y$ are equal, then the partial centrality factors and counter values of the two centpaths are summed.

\bhead{Formalism}
Instead of working with partial centrality scores $\delta(s,v)$ (defined in Section~\ref{sec:bc-spmm})
we work with partial centrality factors (as in~\cite{sariyuce2013betweenness}):
\small
\[\zeta(s,v) = \delta(s,v)/\bar\sigma(s,v) = \sum_{w\in \pi(s,v)}\Big(\frac{1}{\bar\sigma(s,w)}+\zeta(s,w)\Big).\]
\normalsize
Computing $\zeta$ rather than $\delta$ simplifies the algebraic steps done by
the algorithm and leads to a simpler proof of correctness.
Once we have computed $\zeta$, we can construct $\delta$ simply via multiplication
by elements of $\bar\sigma$, which we have already computed via MFBF.

A centpath $x=(x.w,x.p,x.c)\in \mathbb{C} = \mathbb{W}\times \mathbb{R}\times \mathbb{Z}$ 
is a path with a weight $x.w\in \mathbb{W}$, partial centrality score $x.p\in\mathbb{R}$, and a counter $x.c\in\mathbb{Z}$.
Our algorithm will converge to a centpath $x$ for each pair of starting
and destination nodes $s,v$, where the partial dependency factor $x.p=\zeta(s,v)$.
The counter $x.c$ is used to keep track of the number of predecessors who have not propagated a
partial dependency factor up to the node $v$ in a previous iteration. The counter is decremented until reaching zero, at which point the final centrality scores of all predecessors have been integrated into $x.p$ and it is then propagated from $v$ up to the root $s$.

The centpath monoid operator $\opbr$ is defined as
\small
\begin{gather*}
\forall_{x,y \in \mathbb{C}},\quad 
x \opbr  y=
  \begin{cases} 
    x & : x.w>y.w \\
    y & : x.w<y.w \\
    (x.w,x.p+y.p,x.c+y.c) & : x.w=y.w.
  \end{cases}
  \end{gather*}
\normalsize

\begin{algorithm}[t]
\caption{$[Z]=$\ MFBr$(A,T)$}
\algrenewcommand\algorithmicindent{0.8em}
\footnotesize
\begin{algorithmic}[1]
  \Require $\wA$: $n\times n$ adjacency matrix, $T$: matrix of distances and multiplicities
  \Ensure $Z$: centpath matrix of partial centrality factors $\zeta$
   \LeftComment{{
Existential qualifiers  $\forall s\in 1:n_b$ (denoting starting vertices) are implicit and 
$\forall v\in 1:n$ (denoting intermediate vertices).}}
\LeftComment{Initialize centpaths by finding counting predecessors}\label{li:part_2_start}
    \State $\mathcal{Z}(s,v) := (T(s,v).w, 0, 1)$  
    \State $Z := \mathcal{Z}\opbr (\MM{\opbr}{g}{\mathcal{Z}}{{\wA}^\mathsf{T}})$
 \LeftComment{Initialize centpath frontier}
    \IIf {$Z(s,v).c=0$} \IThen $\mathcal{Z}(s,v) := (T(s,v).w, {1}/{T(s,v).m}, -1)$ \IElse $\mathcal{Z}(s,v)=(\infty,0,0)$ \EndIIf
    
    \While{$\mathcal{Z}(s,v)\neq (\infty,0,0)$ for some $s,v$} \label{li:abws:for2} \label{li:part_2_inner_start}
      \State $\mathcal{Z} := \MM{\opbr}{g}{\mathcal{Z}}{{\wA}^\mathsf{T}}$ \Comment{Back-propagate frontier of centralities}
    \LeftComment{Turn off counters for nodes that already appeared in a frontier}
      \IIf {$Z(s,v).c=0$} \IThen $Z(s,v).c=-1$ \EndIIf
      \State $Z := Z \opbr  \mathcal{Z}$ \Comment{Accumulate centralities and increment counters}
      \LeftComment{Determine new frontier based on counters}
      \IIf {$Z(s,v).c=0$} \IThen 
\State \ \   $\mathcal{Z}(s,v) := (T(s,v).w, Z(s,v).p+{1}/{T(s,v).m}, -1)$ \IElse $\mathcal{Z}(s,v)=(\infty,0,0)$ \EndIIf

    \EndWhile \label{li:part_2_inner_end}
\end{algorithmic}
\label{alg:MFBr}
\end{algorithm}
\subsubsection{Brandes Action}

\leavevmode\\
\bhead{Intuition} Our MFBr algorithm (Algorithm~\ref{alg:MFBr}) iteratively updates a matrix $Z$ of centpaths via backward propagation of partial centrality factors from the leaves of the shortest path tree.

\bhead{Formalism} In the inner loop (lines~\ref{li:part_2_inner_start}-\ref{li:part_2_inner_end}),
$Z$ is computed with $\opMM{\opbr}{g}$, where function $g$ is defined
as
\small
\begin{gather*}
g : \mathbb{C} \times \mathbb{W}\to \mathbb{C}, \quad\quad
g(a,w) = (a.w-w, a.p, a.c).
\end{gather*}
\normalsize
$g$ may be interpreted as an action of the monoid $(\mathbb{W},+)$ on set $\mathbb{C}$.

\subsubsection{Algorithm and Correctness}

\leavevmode\\
\bhead{Intuition} For weighted graphs, a single vertex may appear many times in the frontier
as its shortest path information and multiplicity is corrected, unlike in
traversals in BFS or Dijkstra's algorithms, where the total number of 
nonzeros in the matrix multiplication operand $\mathcal{T}$ sums to $(n-1)n_b$ 
over all iterations. For the Brandes step, we can avoid propagating 
unfinalized information as we already know the structure of the shortest
path trees. 

MFBr (Algorithm~\ref{alg:MFBr}) propagates centrality factors optimally via the counter kept by each
centpath, putting vertices in the frontier only when all of their predecessors
have already appeared in previous frontiers. The counter is initialized to the
number of predecessors, is decremented until reaching $0$, added to a frontier 
and set to $-1$ to avoid re-adding the vertex to another frontier. 
This approach is strictly better
than propagating partial centrality scores, which does not contribute to overall progress.
Moreover, this scheme is much faster than using Dijkstra's algorithm
to compute shortest-paths, since it requires the same number of iterations as 
Bellman Ford (Dijkstra's algorithm requires $n-1$ matrix multiplications).

\bhead{Formalism} We can demonstrate correctness of the algorithm by showing that the
counter mechanism serves to correctly define each frontier in the shortest path tree.
\begin{lemma}\label{lem:mfbr}
For any adjacency matrix $\wA$ and a multpath matrix $T$ containing shortest path distances and multiplicities, $Z=$\ MFBr$(\wA, T)$ 
satisfies $Z(s,v).p=\zeta(s,v)$.
\end{lemma}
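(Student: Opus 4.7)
The approach is to mirror the inductive structure of Lemma~\ref{lem:mfbf}, but invert the direction: since the recurrence satisfied by $\zeta$ expresses each partial factor in terms of its shortest-path-DAG (SPDAG) successors, the induction walks the SPDAG from leaves toward $s$ rather than outward from $s$. I would define a height function $h(v)$ as the maximum number of edges on any shortest path from $v$ to an SPDAG-leaf, and prove by induction on $h(v)$ that there is an iteration $j_v$ of the while-loop on line~\ref{li:abws:for2} at which $v$ enters the frontier with $Z(s,v).p=\zeta(s,v)$, after which the centpath $(\tau(s,v),\,\zeta(s,v)+1/\bar\sigma(s,v),-1)$ is back-propagated into the predecessors $u\in\pi(s,v)$.

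The loop invariant I would maintain at the top of iteration $j$ has three clauses: (i) every $v$ with $h(v)<j$ has already been in a frontier and carries the final value $Z(s,v).p=\zeta(s,v)$, with $Z(s,v).c=-1$; (ii) every $v$ with $h(v)=j$ has $Z(s,v).c$ equal to the number of SPDAG-successors of $v$ that have not yet back-propagated, adjusted by the fixed offset produced at initialization; and (iii) $\mathcal{Z}(s,v)\ne(\infty,0,0)$ iff $v$'s counter has just reached the triggering value. The algebraic content is a case analysis of $\opbr$ and $g$: because $g(x,w)=(x.w-w,x.p,x.c)$ and $\opbr$ keeps the maximum-weight entry while summing $.p$ and $.c$ on ties, the product $\MM{\opbr}{g}{\mathcal{Z}}{A^{\mathsf{T}}}$ delivers each frontier centpath at $w$ to exactly those $v$ satisfying $\tau(s,w)-A(v,w)=\tau(s,v)$, which is precisely the condition $v\in\pi(s,w)$. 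This is the step that welds the algebra to the combinatorial structure of the SPDAG.

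Granted the invariant, the inductive step is essentially bookkeeping: once all SPDAG-successors $w$ of $v$ (each with $h(w)<h(v)$, hence already fired) have back-propagated, the $.p$-components summed into $Z(s,v)$ by $\opbr$ add up to $\sum_{w:\,v\in\pi(s,w)}(\zeta(s,w)+1/\bar\sigma(s,w))$, which by the recurrence introduced in this section equals $\zeta(s,v)$; simultaneously the counter drops to its triggering value and the guard at the bottom of the loop promotes $v$ into the next frontier with $.c=-1$. Termination in at most $d$ iterations follows because no vertex has height exceeding $d$.

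The step I expect to cost the most effort is pinning down the counter bookkeeping. The initialization $Z:=\mathcal{Z}\opbr(\MM{\opbr}{g}{\mathcal{Z}}{A^{\mathsf{T}}})$, the pre-accumulation guard ``if $Z(s,v).c=0$ then $Z(s,v).c=-1$'' inside the loop, the self-contribution of $\mathcal{Z}(s,v)=(T(s,v).w,0,1)$, and the $-1$ carried by frontier centpaths all interact. I have to show that the offset between the raw value of $.c$ and ``number of successors still to contribute'' is set correctly at initialization, is preserved across each loop iteration, and is cancelled precisely when the vertex transitions out of the frontier. Once that ledger is balanced, the monoid algebra and the weight-selection argument finish the proof mechanically.
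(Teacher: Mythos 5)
Your proposal follows essentially the same route as the paper's proof: an induction on loop iterations aligned with the levels of the shortest-path DAG (leaves toward $s$), using the recurrence $\zeta(s,v)=\sum_{w:\,v\in\pi(s,w)}\bigl(1/\bar\sigma(s,w)+\zeta(s,w)\bigr)$ and the observation that $\opbr$ together with $g$ routes a frontier centpath at $w$ exactly to the $v$ with $\tau(s,w)-A(v,w)=\tau(s,v)$; the paper packages the partial accumulations as quantities $k_j(s,v)$ indexed by iteration where you use a height function and a loop invariant, but the argument is the same. Your explicit attention to the counter ledger is, if anything, more careful than the paper's treatment of that step.
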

\begin{proof}
We prove that the partial BC scores are computed correctly after $d-1$ iterations of the loop in line~\ref{li:abws:for2} if all shortest paths from $\vec{s}$ in $G$ consist of at most $d$ edges.
As before, we denote the shortest distance from node $s$ to $v$ as $\tau(s,v)$ and the multiplicity as $\bar\omega(s,v)$.
We define $k_j(s,v)\in \mathbb{\mathcal{Z}}$ as the sum of all minimal distance paths of at most $j-1$ edges from $s$ ending at $u$ that are on the minimal distance path between $1$ and $v$,
\small
\begin{align*}
k_j(s,v) &= \sum_{(u,\star)\in P_j(s,v)} \frac{1}{\bar{\sigma}(s,u)}, \ \ \text{where} \\ 
P_j(s,v) &= \{(u,\vec{w}) |  l\in 1:j-1,\vec{w}\in (1:n)^l, \nonumber\\
  \tau(s,u)&+A(u,w_1)+A(w_1,w_2)+\ldots + A(w_{l},v) = \tau(s,v)\}
\end{align*}
\normalsize
Since $P_{d}(s,v)$ is the set of all shortest paths between $s$ and $v$
that are parts of shortest paths between $s$ and $u$, for each $u$
there are $\sigma(s,u,v)/\bar\sigma(u,v)$ such paths, and therefore,
\small
\begin{gather*}
k_{d}(s,v)= \sum_{(u,\star)\in P_{d}(s,v)} \frac{1}{\bar{\sigma}(s,u)}= \sum_{u=1}^n \frac{\sigma(s,u,v)}{\bar\sigma(s,u)\bar\sigma(u,v)}.
\end{gather*}
\normalsize
We now show that $k_{j}(s,v)$ can be expressed in terms of $k_{j-1}(s,u)$ for all
$u\in P_1(v)$ (the 1-edge shortest-path neighborhood of $v$ from $s$).
We accomplish this by disjointly partitioning $P_j(s,v)$ into $P_1(s,v)$ and $\bigcup_{u\in P_1(s,v)}P_{j-1}(s,u)$, which yields,
{ \small
\begin{align*}
k_{j}(s,v)
&= \sum_{(u,\star)\in P_1(s,v)} \bigg(\frac{1}{\bar\sigma(s,u)}+\sum_{(w,\star)\in P_{j-1}(s,u)} \frac{1}{\bar\sigma(s,w)}\bigg) \nonumber \\
&= \sum_{(u,\star)\in P_1(s,v)} \bigg(\frac{1}{\bar\sigma(s,u)}+k_{j-1}(s,u)\bigg) \nonumber
\end{align*}}
Let $Z_j(s,v)$ be the state of $Z(s,v)$ and $\mathcal{Z}_j(s,v)$ be the state of $\mathcal{Z}(s,v)$ after the completion of $j-1$ iterations of the loop on line~\ref{li:abws:for2}.
We argue by induction on $j$, that for all $j\in 1:d$, $Z_j(s,v).p=k_j(s,v)=k_d(s,v)$ and 
\small
\[\mathcal{Z}_j(s,v)= (\tau(s,v),1/\bar\sigma(s,v)+\zeta(s,v),-1)\]
\normalsize
if and only if the largest number of edges in any shortest path from any node $u$ to $v$, such that $\tau(s,v)=\tau(s,u)+\tau(u,v)$, is $j-1$.
In the base case, $j=1$ and thus $Z_1(s,v).p=k_j(s,v)=k_d(s,v)=\zeta(s,v)=0$ for all vertices $v$ with no predecessors (leaves in the shortest path tree); these vertices are set appropriately in $\mathcal{Z}_1$.

For the inductive step, the update on line~\ref{li:abws:for2} contributes the appropriate factor of $\frac{1}{\bar\sigma(s,u)}+k_{j-1}(u)$ from each predecessor vertex $u$.
Next, each such predecessor vertex $u$ must have been a member of a single frontier by iteration $j$, since the larger number of edges in any shortest path from $u$ to any node $v$ must be no greater than $j-1$.
Therefore, the counter $Z_j(s,v).c=0$, which means $\mathcal{Z}_j(s,v)$ is set appropriately (for subsequent iterations $k>j$, $\mathcal{Z}_{k}(s,v)=(\infty,0,0)$ since we set $Z_j(s,v).c=-1$ at iteration $j+1$).
\end{proof}

\subsection{Combined BC Algorithm}

To obtain a complete algorithm for BC, we combine MFBF and MFBr into MFBC (Algorithm~\ref{alg:MFBC}).
MFBC is parametrized with a batch size $n_b$ and proceeds by computing MFBF and MFBr to obtain partial centrality factors for $n_b$ starting vertices at a time.
These factors are then appropriately scaled by multiplicities ($\bar\sigma(s,v)$) and accumulated into a vector of total centrality scores.
\begin{algorithm}[t]
\caption{$[\lambda]=$\ MFBC$(A)$}
\algrenewcommand\algorithmicindent{0.8em}
\footnotesize
\begin{algorithmic}[1]
  \Require $\wA$: $n\times n$ adjacency matrix, $n_b$: the batch size
  \Ensure $\lambda$: a vector of BC scores
  \State $\forall v\in 1:n,\quad\lambda(v) := 0$ \Comment{Initialize the BC scores}
  \For{$i\in 1:n/n_b$} \label{li:main_loop}
    \State $[T]=$\ MFBF$(A,(i-1)n_b+1:in_b)$
    \State $[Z]=$\ MFBr$(A,T)$
  
    \LeftComment{Accumulate partial centralities: $\delta(s,v) = \zeta(s,v)\cdot \bar\sigma(s,v)$}
    \State $\forall v\in 1:n,\quad \lambda(v) := \lambda(v) + \sum_{s=1}^{n_b} Z(s,v).p\cdot T(s,v).m$ \label{li:part_2_end}
 \EndFor
\end{algorithmic}
\label{alg:MFBC}
\end{algorithm}

\begin{theorem}\label{thm:main}
For any adjacency matrix $\wA$ and $n_b \in 1:n$,  $\lambda=$\ MFBC$(\wA,n_b)$ 
satisfies $\lambda(v)=\sum_{s,t \in V} \frac{\sigma(s,t,v)}{\bar\sigma(s,t)}$.
\end{theorem}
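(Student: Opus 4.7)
The plan is to reduce the theorem to the two lemmas already established, Lemma~\ref{lem:mfbf} and Lemma~\ref{lem:mfbr}, together with the definition of $\zeta$ given in Section~\ref{sec:mfbr}. The structure of MFBC (Algorithm~\ref{alg:MFBC}) is a clean outer loop over batches of starting vertices, so essentially all the work has been done; the remaining task is bookkeeping over batches plus one algebraic identity linking $\zeta$, $\bar\sigma$, and $\delta$.

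First, I would fix an arbitrary batch index $i\in 1:n/n_b$ and let $S_i = \{(i-1)n_b+1,\ldots,in_b\}$ be the starting vertices for that batch. By Lemma~\ref{lem:mfbf} applied to $\vec{s}=S_i$, the matrix $T$ produced satisfies $T(s,v) = (\tau(\vec{s}(s),v),\bar\sigma(\vec{s}(s),v))$ for every $s\in 1:n_b$ and $v\in 1:n$. In particular $T(s,v).m = \bar\sigma(\vec{s}(s),v)$ and the distance component is exactly what Lemma~\ref{lem:mfbr} requires as its hypothesis on $T$.

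Next, I would apply Lemma~\ref{lem:mfbr} to conclude that the $Z$ produced in the same iteration satisfies $Z(s,v).p = \zeta(\vec{s}(s),v)$. Recalling that $\zeta(s,v) = \delta(s,v)/\bar\sigma(s,v)$ by definition, the product inside the accumulation on line~\ref{li:part_2_end} is
\[
Z(s,v).p\cdot T(s,v).m \;=\; \zeta(\vec{s}(s),v)\cdot \bar\sigma(\vec{s}(s),v) \;=\; \delta(\vec{s}(s),v),
\]
so this batch contributes exactly $\sum_{s\in S_i}\delta(s,v)$ to $\lambda(v)$. One subtle point to mention: when $\bar\sigma(\vec{s}(s),v)=0$ (i.e.\ $v$ is unreachable from $\vec{s}(s)$) the factor $1/T(s,v).m$ in MFBr is never activated because the centpath remains $(\infty,0,0)$; consequently $Z(s,v).p=0$ and the product is $0=\delta(\vec{s}(s),v)$, so the identity still holds.

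Finally, summing over the $n/n_b$ disjoint batches $S_1,\ldots,S_{n/n_b}$, which together partition $V$, the outer loop accumulates
\[
\lambda(v) \;=\; \sum_{i=1}^{n/n_b}\sum_{s\in S_i}\delta(s,v) \;=\; \sum_{s\in V}\delta(s,v) \;=\; \sum_{s,t\in V}\frac{\sigma(s,t,v)}{\bar\sigma(s,t)},
\]
where the last equality is the defining expression of $\delta$ from Section~\ref{sec:bc-spmm}. This completes the proof. Since nearly all the content is carried by the two lemmas, there is no real obstacle; the only thing to be careful about is handling the unreachable case cleanly and verifying that the batches exactly tile $1:n$ (which they do whenever $n_b$ divides $n$; otherwise one would add a remainder batch, a trivial modification).
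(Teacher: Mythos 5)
Your proposal is correct and follows essentially the same route as the paper's own proof: invoke Lemma~\ref{lem:mfbf} and Lemma~\ref{lem:mfbr} per batch, use the identity $\zeta(s,v)\cdot\bar\sigma(s,v)=\delta(s,v)$, and sum over the batches partitioning $V$. Your extra remarks on the unreachable case ($\bar\sigma(s,v)=0$) and on handling $n_b \nmid n$ are sensible refinements of details the paper treats only briefly (it assumes $n \bmod n_b = 0$ by padding with disconnected vertices), but they do not change the argument.
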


\begin{proof}
We assume $n \mod n_b =0$, if it does not hold then $n\mod n_b$ disconnected
vertices can be added to $G$ without changing $\lambda$.
For each vertex batch, MFBF computes the correct shortest distances and multiplicities $T$ by Lemma~\ref{lem:mfbf}.
For each $T$, MFBr computes the correct partial centrality scores $Z$ by Lemma~\ref{lem:mfbr}.
Thus, at iteration $i$, $T(s,v).m=\bar\sigma((i-1)n_b+s,v)$ and
$Z(s,v).p=\zeta((i-1)n_b+s,v)$.
Next, over all iterations, line~\ref{li:part_2_end} expands to
\small
\begin{align*}
\lambda(v) &= \sum_{s=1}^n Z(s,v).p\cdot T(s,v).m=\sum_{s\in V}\zeta(s,v) \cdot \bar{\sigma}(s,v) \\
&=\sum_{s\in V}\delta(s,v)=\sum_{s\in V}\sum_{t \in V} \frac{\sigma(s,t,v)}{\bar\sigma(s,t)}.
\end{align*}
\normalsize
\end{proof}

\section{Communication Complexity}
\label{sec:parallel}

MFBC leverages all the available parallelism in the problem to
accelerate overall progress.  We now formally study its scalability by bounding
its communication complexity. 
%
%
We first present a cost model (Section~\ref{sec:cost_model}). 
In Section~\ref{sec:par_spmspm} we derive the communication costs for sparse
matrix multiplication, by far the most expensive operation in MFBC. 
We assume sparse matrices with arbitrary dimensions and nonzero count. 
%
%
In several cases, to concretize the model and derive tighter bounds, we use
matrix multiplication (tensor contraction) routines in CTF. However, this does
not limit our analysis as CTF employs a larger space of sparse matrix
multiplication variants than any previous work.
Our results provide a communication bound that is substantially lower than
previous results for sparse matrix multiplication when the number of nonzeros
is imbalanced between matrices. As this scheme is a critical primitive not only in
graph algorithms, but in numerical algorithms such as multigrid,
this theoretical result is of stand-alone
importance.

Finally, in Section~\ref{sec:cost_analysis}, we express the cost of MFBC in terms of
the communication cost of the sparse matrix multiplications it executes. 
Our analysis shows that the latency (synchronization) cost of MFBC may in several cases be
higher than the best-known all-pairs shortest-path
algorithms by a factor proportional to the number of batches $n /
n_b$, which in turn depends on the available memory.
Simultaneously, MFBC can operate with $O(m/p)$ memory per processor, while all-pairs shortest-paths
algorithm require $\Omega(n^2/p)$ memory per processor.
Finally, the communication bandwidth cost of MFBC is identical or better than all the
used comparison targets.

\subsection{Cost Model}
\label{sec:cost_model}

We use a parallel execution model where we count the number of messages and
amount of data communicated by any processor.  We do not keep track of the
number of CPU operations because, for sparse matrix multiplication, all the
considered algorithms have an optimal computation cost, and for BC our
algorithm is work-optimal in the unweighted case. The computation cost in the
weighted case depends on the number of times each vertex appears in a frontier
during the MFBC execution, which depends on the graph connectivity as well as
the edge weights.

We use the $\alpha-\beta$ model~\cite{solomonik2014tradeoffs} where the
latency of sending a message is $\alpha$ and the inverse bandwidth is $\beta$.
We assume that $\alpha \geq  \beta$.
There are $p$ processes and $M$ (number of words) is the size of a local memory
at every processor.
Next, the cost of collective communication routines (scatter, gather, broadcast,
reduction, and allreduction) on $p$ processors in the $\alpha-\beta$
model is $O(\beta\cdot x +\alpha \cdot \log p)$~\cite{DBLP:journals/corr/AzadBBDGSTW15}; 
$x$ is the maximum number of
words that each processor owns at the start or end of the collective.
Furthermore, the cost of a \emph{sparse reduction} where each
processor inputs a sparse array and the resulting array has $x$ nonzeros is
also $O(\beta\cdot x +\alpha \cdot \log p)$.
Finally, we use  $\nnz(X)$ to denote the number of non-zeros in any matrix $X$ and
$\flops(X,Y)$ to denote the number of nonzero products when multiplying sparse
matrices $X$ and $Y$.

\subsection{Parallel Sparse Matrix Multiplication}
\label{sec:par_spmspm}

We first analyze the product of sparse matrices $A^{m \times k}$
and $B^{k \times n}$ that produce matrix $C^{m \times n}$.
We use algorithms based on 1-, 2-, and 3-dimensional matrix
decomposition.
They all have a computation cost of $O(\flops(A,B)/p)$, 
which we omit.

All the considered algorithms and implementations use matrix blocks that
correspond to the cross product of a subset of columns and a subset of rows of
the matrices. The blocks are chosen obliviously of the matrix structure.
For sparse matrices with a sufficiently large number of nonzeros, randomizing
the row and column order implies that the number of nonzeros of each such block
is proportional to the block size. Thus, we assume that the number
of nonzeros in any block of dimensions $b_1\times b_2$ of a sparse matrix $A^{m
\times k}$ has $O(\nnz(A)b_1b_2/(mk))$ nonzeros so long as $mk/(b_1b_2)\leq p$.
When $\nnz(A)\gg p$, 
the assumption holds true with high probability
based on a balls-into-bins argument~\cite{2015arXiv151200066S}.

We also assume that multiplying any two blocks of equal
dimensions yields about the same number of nonzero products and output
nonzeros. This assumption does not impact our communication-cost
analysis, but allows us to assert that the computational work is asymptotically load-balanced.
Thus, when multiplying blocks of size $b_1\times b_2$ of matrix $A$
and $b_2\times b_3$ of matrix $B$, the number of nonzero operations is
$O(\flops(A,B)b_1b_2b_3/(mnk))$ and the number of nonzeros in the output block
contribution to matrix $C$ is 
\small
\[O\bigg(\min\Big[\frac{\nnz(C)}{mn}b_1b_3,\frac{\flops(A,B)}{mnk}b_1b_2b_3\Big]\bigg).\]  
\normalsize
For a sparse matrix corresponding to uniform
random graphs, the respective numbers are 
\small
\[\flops(A,B)\approx \frac{\nnz(A)}{mk}\frac{\nnz(B)}{kn}mnk=\nnz(A)\nnz(B)/k\]
\normalsize
and $\nnz(C)\approx \min(mn,\flops(A,B))$.
We can invoke the same balls-into-bins argument~\cite{2015arXiv151200066S} 
to argue that the work load-balance assumption holds, by arguing about the induced layout of the $\flops(A,B)$
operations as a third-order cyclically distributed tensor.

\subsubsection{1D Algorithms}

1D decompositions 
are the simplest way to parallelize a matrix multiplication.
There are three variants, each of which replicates one of the matrices and
blocks the others into $p$ pieces.  Variant~A replicates $A$ via
broadcast and assigns each processor a set of columns of $B$ and $C$.  Variant~B
broadcasts $B$ and assigns each processor a set of rows of $A$ and $C$.
Variant~C assigns each processor a set of columns of $A$ and rows of $B$,
computes their product, and uses a reduce to obtain $C$.
The communication cost of version~X of a 1D algorithm for $\text{X}\in\{\text{A},\text{B},\text{C}\}$ is
\small
\[W_\text{X}(X,p)=O(\alpha\cdot \log p+\beta\cdot \nnz(X)).\]
\normalsize

\subsubsection{2D Algorithms}

2D algorithms~\cite{Cannon:1969,Geijn:SUMMA:1997, matmul3d}
block all matrices on a grid of $p_r\times p_c$
processors and move the data in steps to multiply matrices.
2D algorithms can be based on point-to-point or collective
communication. The former are up to $O(\log p)$ faster in latency, but the
latter generalize easier to rectangular processor grids.
The algorithms are naturally extended to handle sparse matrices by treating the matrix blocks as sparse~\cite{DBLP:journals/corr/abs-1109-3739,Ballard:2013:COP:2486159.2486196}.
One of the simplest 2D algorithms is Cannon's algorithm~\cite{Cannon:1969}, which shifts blocks of $A$ and $B$ on a square processor grid, achieving a communication cost of
\small
\[O\bigg(\alpha\cdot \sqrt{p} + \beta \cdot \frac{\nnz(A) + \nnz(B)}{\sqrt{p}}\bigg).\] 
\normalsize
The algorithm is optimal for square matrices, but other variants achieve lower communication cost when the number of nonzeros in the two operand matrices are different.

Our implementation uses three variants of 2D algorithms using broadcasts
and (sparse) reductions.
The variant AB broadcasts blocks of $A$ and $B$ along processor grid rows and
columns, while the variants AC and BC reduce C and broadcast A and
B, respectively.  CTF uses $\lcm(p_r,p_c)$ (lcm is the least common multiple) broadcasts/reductions and adjusts
$p_r$ and $p_c$ so that $\lcm(p_r,p_c)\approx \max(p_r,p_c)$ steps of
collective communication are performed. When each matrix block is sparse with the specified
load balance assumptions, the costs achieved by these 2D algorithms are given
in general by $W_\text{YZ}$ for variants
$\text{YZ}\in\{\text{AB},\text{AC},\text{BC}\}$ as
$W_\text{YZ}(Y,Z,p_r,p_c) = $
\small
\begin{gather*}
O\bigg(\alpha\cdot \max(p_r,p_c)\log(p)+\beta\cdot\Big(\frac{\nnz(Y)}{p_r} + \frac{\nnz(Z)}{p_c}\Big)\bigg)\nonumber
\end{gather*}
\normalsize

\subsubsection{3D Algorithms}

While 2D algorithms are natural from a matrix-distribution perspective, the
dimensionality of the computation suggests the use of 3D decompositions~\cite{dekel:657,matmul3d,snirmatmul,berntsen1989communication,Johnsson:1993:MCT:176639.176642,mccol_tiskin_99}, where
each processor computes a subvolume of the $mnk$ dense products.
3D algorithms have been adapted to sparse matrices, in particular by the Split-3D-SpGEMM 
scheme~\cite{DBLP:journals/corr/AzadBBDGSTW15} that costs 
\small
\begin{gather*}
O\left(\alpha\cdot \sqrt{cp}\log{p} + \beta \cdot \left(\frac{\nnz(A)+\nnz(B)}{\sqrt{cp}} + \frac{\flops(A,B)}{p}\right)\right)
\end{gather*}
\normalsize
by using a the grid of processes that is $\sqrt{p/c} \times \sqrt{p/c} \times c$.

We derive 3D algorithms (and implement in CTF) by nesting the three 1D algorithm variants
with the three 2D algorithm variants. The cost of
the resulting nine 3D variants on a $p_1\times p_2\times p_3$ processor grid with the 1D
algorithm applied over the first dimension is
\small
\begin{align*}
W_\text{X,YZ}&(X,Y,Z,p_1,p_2,p_3)=W_\text{X}(X[p_2,p_3])\\
&+
\begin{cases} 
W_\text{YZ}(Y,Z[p_1],p_2,p_3) & : X=Y, \\
W_\text{YZ}(Y[p_1],Z,p_2,p_3) & : X=Z,\\
W_\text{YZ}(Y[p_1],Z[p_1],p_2,p_3) & : X\notin \{Y,Z\}, \\
\end{cases}
\end{align*}
\normalsize
for $(\text{X},\text{YZ}) \in \{\text{A},\text{B},\text{C}\}\times \{\text{AB},\text{AC},\text{BC}\}$, with notation $X[p_2,p_3]$ denoting that the 1D algorithm operates on blocks of $X$ given from a $p_2\times p_3$ distribution, while $Y[p_1]$ and $Z[p_1]$ refer to 1D distributions.
This cost simplifies to
\small
\begin{align*}
&W_\text{X,YZ}(X,Y,Z,p_1,p_2,p_3)=\\
&O\bigg(\alpha\cdot \max(p_1,p_2)\log(\min(p_1,p_2))+\beta\cdot \frac{\nnz(X)}{p_2p_3}\bigg)\\
&+
\begin{cases} 
O\bigg(\beta\cdot(\frac{\nnz(X)}{p_2}+\frac{\nnz(Z)}{p_1p_3})\bigg) & : X=Y, \\
O\bigg(\beta\cdot(\frac{\nnz(Y)}{p_1p_2}+\frac{\nnz(X)}{p_3})\bigg)  & : X=Z,\\
O\bigg(\beta\cdot(\frac{\nnz(Y)}{p_1p_2}+\frac{\nnz(Z)}{p_2p_3})\bigg)  & :  X\notin \{Y,Z\}. \\
\end{cases}
\end{align*}
\normalsize
The amount of memory used by this algorithm is 
\small
\[M_\text{X,YZ}(X,Y,Z,p,p_1)=O\bigg(\frac{\nnz(X)p_1}{p}+\frac{\nnz(Y)+\nnz(Z)}{p}\bigg).\]
\normalsize
As we additionally consider pure 1D and 2D algorithms, then pick the 1D, 2D, or 3D
variant of least cost.
Provided unlimited memory, the execution time of our sparse matrix multiplication scheme is no greater than
\small
\begin{align*}
&W_\text{MM}(A,B,C,p)=O\bigg(\min_{\substack{p_1,p_2,p_3\in \mathbb{N} \\ p_1p_2p_3=p}} \bigg[\alpha\cdot \max(p_1,p_2,p_3)\log p \\
&+\beta\cdot \Big(\frac{\nnz(A)}{p_1p_2}\delta(p_3) 
+ \frac{\nnz(B)}{p_2p_3}\delta(p_1) + \frac{\nnz(C)}{p_1p_3}\delta(p_2)\Big)\bigg]\bigg),
\end{align*}
\normalsize
where $\delta(x)=1$ when $x\geq 1$ and $\delta(1)=0$.

\subsection{Parallel Betweenness Centrality}
\label{sec:cost_analysis}

We finally use the matrix multiplication analysis to ascertain a communication
cost bound for MFBC, which performs the bulk of the computation
via generalized matrix multiplication.
We focus on unweighted graphs; the associated proofs heavily
use the fact that each vertex appears in a unique frontier.
We then (Section~\ref{sec:weighted-anal}) discuss weighted graphs; in this case
we cannot anymore use the same technique to ascertain bounds on the size of
each frontier.

\begin{theorem}\label{thm:mfbc_cost}
For any unweighted $n$-node $m$-edge graph $G$ with adjacency matrix $A$ and diameter $d$,
on a machine with $p$ processors with $M=\Omega(cm/p)$ words of memory each,
for any $c\in [1,p]$, MFBC (Algorithm~\ref{alg:MFBC}) can execute with
communication cost

\small
\begin{align*}
W_\text{MFBC}(n,m,p,c)
=O\bigg(\alpha\cdot \frac{dn^2}{m}\sqrt{p/c^3} \log p 
+\beta\cdot \Big(\frac{n^2}{\sqrt{cp}}+\frac{n\sqrt{m}}{p^{2/3}}\Big)\bigg).
\end{align*}
\normalsize
\end{theorem}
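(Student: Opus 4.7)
The plan is to express $W_\text{MFBC}$ as a sum of sparse matrix multiplication costs, invoke $W_\text{MM}$ from Section~\ref{sec:par_spmspm} for each product, and then optimize the batch size $n_b$ and the processor grid subject to the memory budget $M = \Omega(cm/p)$. I would first observe that MFBC performs $n/n_b$ batches of MFBF followed by MFBr, each consisting of at most $d$ iterations of an inner loop, and each iteration executing exactly one sparse matrix product between an $n \times n_b$ frontier matrix ($\mathcal{T}$ or $\mathcal{Z}$) and $A$ (or $A^{\mathsf{T}}$). All communication is concentrated in these products, so bounding their cumulative cost suffices.

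Next I would exploit the unweighted assumption to bound the total size of the frontier matrices across iterations. On an unweighted graph the Bellman--Ford loop degenerates to BFS, so for each of the $n_b$ sources in a batch, every reachable destination appears in exactly one MFBF frontier, and every intermediate vertex appears in exactly one MFBr frontier. Hence $\sum_j \nnz(\mathcal{T}_j)$ and $\sum_j \nnz(\mathcal{Z}_j)$ are each $O(n\cdot n_b)$ per batch, and the accumulators $T, Z$ carry at most $n\cdot n_b$ nonzeros at any time.

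I would then pick a processor grid $p_1 \times p_2 \times p_3 = c \times \sqrt{p/c} \times \sqrt{p/c}$, which is the largest $c$-way replication of $A$ that respects the memory budget and satisfies $\max(p_i) = \sqrt{p/c}$ whenever $c \leq p^{1/3}$. For each iteration I would choose the 3D variant $W_\text{X,YZ}$ best suited to the current frontier density, exploiting that $A$ is static so its broadcast cost amortizes over iterations within a batch rather than being paid per iteration. Substituting $\nnz(A) = m$ and the iteration-specific frontier counts into $W_\text{MM}$ gives per-iteration latency $O(\alpha \sqrt{p/c}\log p)$; the bandwidth terms are linear in each $\nnz$ argument and hence additive across iterations, collapsing via the total bound $O(n\cdot n_b)$ per batch. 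Summing over all $d \cdot n/n_b$ iterations and choosing $n_b = \Theta(mc/n)$ then yields the stated latency $O(\alpha (dn^2/m)\sqrt{p/c^3}\log p)$ and bandwidth $O(\beta(n^2/\sqrt{cp} + n\sqrt{m}/p^{2/3}))$.

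The main obstacle is this last optimization step. One must simultaneously (i) respect the memory coupling between $c$ and $p_1$, (ii) argue that the broadcast of $A$ within each batch truly amortizes across the $d$ iterations so that the $\beta m/\sqrt{cp}$ contribution is not multiplied by the iteration count, and (iii) balance the latency penalty of many batches ($n/n_b$ large) against the bandwidth penalty of large ones ($n_b$ large). Verifying that a single choice of $n_b$ and grid simultaneously produces both bandwidth subterms in the theorem---one arising from the cumulative $n\cdot n_b$ frontier flow over all batches, the other from $A$-block communication---and that the per-iteration 3D variant can be picked based on whether the frontier or the output nonzero count dominates, is the crux of the argument.
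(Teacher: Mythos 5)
Your plan follows the paper's proof essentially step for step: the same reduction of $W_\text{MFBC}$ to $\frac{n}{n_b}\sum_{i=1}^d W_\text{MM}$, the same $\sqrt{p/c}\times\sqrt{p/c}\times c$ grid with the replication of $\wA$ amortized over iterations and batches, the same choice $n_b=\Theta(cm/n)$ forced by the $O(nn_b/p)$ memory needed for $T$, and the same unweighted-frontier argument giving $\sum_i\nnz(F_i)=O(nn_b)$ per batch. The one step you explicitly leave unresolved is the one the paper does have to argue separately: $W_\text{MM}(\wA,F_i,G_i,p)$ also charges $\beta\cdot\nnz(G_i)/(p_1p_3)$ for the \emph{output} $G_i$ of $\MM{\oplus}{f}{\mathcal{T}}{\wA}$, and $G_i$ is generally denser than the next frontier $F_{i+1}$ because it also contains updates to vertices whose distances are already settled, so your accounting of the frontier matrices $\mathcal{T}_j$ alone does not bound it. The paper closes this by observing that in the unweighted (BFS) case a vertex can only receive contributions from the level containing it and the two adjacent levels, so $\sum_i\nnz(G_i)\le 3nn_b$; you need this (or an equivalent) observation before the bandwidth terms ``collapse via the total bound $O(n\cdot n_b)$ per batch'' as you claim.
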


\begin{proof}
MFBC is dominated in computation and communication cost
by multiplications of sparse matrices, triggered by the operator $\opMM{\oplus}{f}$
within MFBF and $\opMM{\otimes}{g}$ within MFBr.
There are up to $2d+1$ such matrix multiplications in total. 
Without loss of generality, we consider only the $d$ within MFBF, letting $F_{i}$ be
the frontier ($\mathcal{T}$) at iteration $i$ and $G_{i}$ be the output of $\MM{\oplus}{f}{\mathcal{T}}{\wA}$, which include $F_{i+1}$ but can be much denser.
We can then bound the cost of MFBC as
\small
\begin{align*}
&W_\text{MFBC} = O\bigg(\min_{n_b\in 1:n} \Big[ \frac{n}{n_b} \sum_{i=1}^d W_\text{MM}(\wA,F_{i},G_{i},p)\Big] \bigg) \\
&= O\bigg(\min_{n_b\in 1:n} \frac{n}{n_b}\sum_{i=1}^d \min_{\substack{p_1,p_2,p_3\in \mathbb{N} \\ p_1p_2p_3=p}} \bigg[\alpha\cdot \max(p_1,p_2,p_3)\log p \\
&+\beta\cdot \Big(\frac{m}{p_1p_2}\delta(p_3) 
+ \frac{\nnz(F_i)}{p_2p_3}\delta(p_1) + \frac{\nnz(G_i)}{p_1p_3}\delta(p_2)\Big)\bigg]\bigg).
\end{align*}
\normalsize
The MFBC algorithm requires $O(nn_b/p)$ memory to store $T$, therefore, we have $nn_b/p=O(M)$, and select $n_b=cm/n$,
\small
\begin{align*}
W&_\text{MFBC}=
 O\bigg(\frac{n^2}{cm} \sum_{i=1}^d \min_{\substack{p_1,p_2,p_3\in \mathbb{N} \\ p_1p_2p_3=p}} \bigg[\alpha\cdot \max(p_1,p_2,p_3)\log p \\
&+\beta\cdot \Big(\frac{m}{p_1p_2}\delta(p_3) 
+ \frac{\nnz(F_i)}{p_2p_3}\delta(p_1) + \frac{\nnz(G_i)}{p_1p_3}\delta(p_2)\Big)\bigg]\bigg).
\end{align*}
\normalsize
We use a 3D algorithm with $p_1= p_2=\sqrt{p/c}$, $p_3=c$, which replicates
$\wA$ via a 1D algorithm, then employs the BC variant of a 2D algorithm, 
using $O(cm/p)$ memory.
$A$'s replication can be amortized over (up to $d$) sparse matrix multiplications and over the $\frac{n^2}{cm}$ batches, since $A$ is always the same adjacency matrix. Thus,
\small
\begin{alignat*}{2}
W_\text{MFBC}&=&&
 O\bigg(\beta\cdot \frac{cm}{p}+\frac{n^2}{cm} \Bigg(\Big( \sum_{i=1}^d  \bigg[\alpha\cdot \sqrt{p/c}\log p \\
& &&+\beta\cdot \Big( \frac{\nnz(F_i)}{\sqrt{pc}} + \frac{\nnz(G_i)}{\sqrt{pc}}\Big)\bigg]\bigg)\Bigg), \\
\end{alignat*}
\normalsize
and furthermore, over all $n_b$ batches the total cost is
\small
\begin{alignat*}{2}
W_\text{MFBC}=
O\bigg(\alpha \frac{dn^2}{m}\sqrt{\frac{p}{c^3}}\log p 
+ \beta   \bigg[ \frac{cm}{p}+\sum_{i=1}^d\frac{n^2(\nnz(F_i)+\nnz(G_i))}{m\sqrt{pc^3}}\bigg]\bigg).
\end{alignat*}
%
\normalsize
Now, since the graph is unweighted, we know that each vertex appears in a unique frontier, so
$\sum_{i-1}^d\nnz(F_i)\leq nn_b=cm$.
Therefore, each node can be reached from 3 frontiers (the one it is a part of, the previous one, and the subsequent one), therefore
$\sum_{i-1}^d\nnz(G_i)\leq 3cm$.
Then, the total bandwidth cost over all $d$ iterations and $cm/n$ batches is $O(\beta \cdot (n^2/\sqrt{cp} + \frac{cm}p))$.
This cost is minimized for $c=p^{1/3}n^2/m$, so with $M=\Omega(n^2/p^{2/3})$ memory, the cost $O(\beta\cdot n\sqrt{m}/p^{2/3})$ can be achieved.
\end{proof}

\subsubsection{Discussion on Weighted Graphs}
\label{sec:weighted-anal}

Our communication cost analysis can be extended to weighted graphs, provided bounds on $\sum_i \nnz(F_i)$ and $\sum_i \nnz(G_i)$
for each batch. The quantity $\sum_i \nnz(F_i)$ can be bounded given an amplification factor bounding
the number of Bellman-Ford iterations in which the shortest path distance between any given pair of source and destination vertices is changed.
However, we do not see a clear way to bound $\sum_i \nnz(G_i)$ for weighted graphs.
We evaluate MFBC for weighted graphs in the subsequent section, observing a slowdown proportional to the factor of increase 
in the number of iterations with respect to the unweighted case (in the unweighted case it is the diameter $d$). 

\subsubsection{Comparison to Other Analyses}

\sloppy
We are not aware of other communication cost studies of BC, but 
we can compare our approach to those computing the full distance matrix via 
all-pairs shortest-paths (APSP) algorithms, requiring at least $n^2/p$ memory, regardless of $m$.
The best-known APSP algorithms leverage 3D matrix multiplication to obtain
a bandwidth cost of $O(\beta\cdot n^2/\sqrt{cp})$ using $O(cn^2/p)$ memory for any $c\in [1,p^{1/3}]$~\cite{tiskin_apsp}.
MFBC matches this bandwidth cost, while using only $O(cm/p)$ memory.
Further, given sufficient memory $M=\Omega(n^2/p^{2/3})$, our algorithm is up to $\min(n/\sqrt{m},p^{2/3})$ faster.
When also considering an algorithm that replicates the graph as an alternative, the best speed-up achievable by MFBC is for $M=\Theta(n^2/p^{2/3})$ memory with $n/\sqrt{m}=p^{1/3}$, and when $\beta \gg \alpha$, in which case $W_\text{MFBC}(n,n^2/p^{2/3},p,p^{2/3})=O(\beta\cdot n^2/p)$ is $p^{1/3}$ times faster than Floyd-Warshall, path doubling, or Dijkstra with a replicated graph.

\subsubsection{Discussion on Latency}

The Floyd-Warshall APSP algorithm has latency cost $O(\alpha\cdot \sqrt{cp})$,
but a path-doubling scheme can achieve $O(\alpha\cdot \log p)$~\cite{tiskin_apsp} using $O(n^2/p^{2/3})$ memory.
Given this amount of memory, MFBC can achieve a latency cost of 
\small
\[O\bigg(\alpha\cdot d \log p\Big(\frac{n^2}{\sqrt{p}m} +\sqrt{n/\sqrt{m}}\Big)\bigg).\]
\normalsize
It might be possible to improve this latency cost by using different sparse matrix multiplication algorithms.

\subsubsection{Discussion on Scalability}

The capability of our algorithm to employ large replication factors $c$ gives it good strong scalability properties.
If each processor has $M=O(m/p_0)$ memory, it is possible to achieve perfect strong scalability in bandwidth cost using up to $p^{3/2}_0n^3/m^{3/2}$ processors,
while for up to $p^{3/2}_0n^2/m$, 
\small
\[W_\text{MFBC}(n,m,cp_0,c) = \frac{1}{c}W_\text{MFBC}(n,m,p_0,1)\] 
\normalsize
is satisfied, so strong scalability is achieved in all costs from $p_0$ to $p_0^{3/2}n^2/m$ processors.
This range in strong scalability is better than that achieved by the best known square dense matrix multiplication algorithms, $p_0$ to $p_0^{3/2}$~\cite{SD_EUROPAR_2011}.

\section{Implementation}
\label{sec:impl}

We implement two parallel versions of MFBC using CTF. 
The first, CTF-MFBC, uses CTF to dynamically select data layouts without guidance from the developer.
The second, CA-MFBC, predefines the 3D processor grid layout that we used to minimize theoretical communication cost in the proof of Theorem~\ref{thm:mfbc_cost}.
We first summarize the functionality of CTF and explain how it provides the sparse matrix operations necessary for MFBC.
We then give more details on how CTF handles data distribution and communication.
\subsection{From Algebra to Code}
\label{sec:ctf}

CTF permits definition of all well-known algebraic structures
and implements tensor contractions with user-defined addition and
multiplication operators~\cite{2015arXiv151200066S}.
Matrices can encode graphs and subgraphs (frontiers);
tensors of order higher than two can represent hypergraphs.
As graphs are sufficient for the purposes of this paper, we refer only
to CTF matrix operations.
An $n\times n$ CTF matrix is distributed across a 
{\kwstyle{World}} (an MPI communicator), and has attributes for symmetry, sparsity, and the algebraic structure of its elements.
We work with adjacency matrices with weights in a set {\CD
W} 
\begin{lstlisting}[frame=none]
Matrix<W> A(n,n,SP,D,Y);
\end{lstlisting}
where {\CD D} is a {\kwstyle{World}} and {\CD Y} defines the {\kwstyle{Monoid}{\CD<W>}} of weights with minimum as the operator.

CTF permits operations on one, two, or three matrices at a time,
each of which is executed bulk synchronously.
To define an operation, the user assigns a pair of indices (character labels) to each matrix 
(generally, an index for each mode of the tensor).
An example function inverting all elements of a matrix {\CD A} and storing them in {\CD B} is expressed as 
\begin{lstlisting}[frame=none]
Function<int,float>([](int x){ return 1./x; })
B["ij"] = f(A["ij"]);
\end{lstlisting}
All CTF operations may be interpreted as nested loops, where one operation is performed on elements of multidimensional arrays in the innermost loop.
For instance, in terms of loops on arrays {\CD A} and {\CD B}, the above example is
\begin{lstlisting}[frame=none]
for (int i=0; i<n; i++)
  for (int j=0; j<n; j++) {B[i,j] = 1./A[i,j];}
\end{lstlisting}
For contractions, we can define functions with two operands.

We express $\opMM{\oplus}{f}$ from Section~\ref{sec:bfact} by defining functions {\CD u} for operation $\oplus$ and {\CD f} for $f$, and then a {\kwstyle{Kernel}} corresponding to $\opMM{\oplus}{f}$
\begin{lstlisting}[frame=none]
Kernel<W,M,M,u,f> BF;
Z["ij"]=BF(A["ik"],Z["kj"]);
\end{lstlisting}
If {\CD Z} is a matrix with each element in {\CD M} and {\CD A} is the adjacency matrix with elements in {\CD W}, the above CTF operation executes $Z=\MM{\oplus}{f}{A}{Z}$.
One could supply the algebraic structure in a {\kwstyle{Monoid}} when defining the matrix, then use {\kwstyle{Function}} in place of a {\kwstyle{Kernel}}.
However, the latter construct parses the needed user-defined functions as template arguments rather than function arguments, enabling generation of more efficient (sparse) matrix multiplication kernels for blocks.
Having these alternatives enables the user to specify which kernels are intensive and should be optimized thoroughly at compile time, while avoiding unnecessary additional template instantiations.

Other CTF constructs employed by our MFBC code are
\begin{itemize}
\item {\kwstyle{Tensor::write()}} to input graphs bulk synchronously,
\item {\kwstyle{Tensor::slice()}} to extract subgraphs,
\item {\kwstyle{Tensor::sparsify()}} to filter the next frontier,
\item {\kwstyle{Transform}} to modify matrix elements with a function.
\end{itemize}
More information on the scope of operations provided by CTF is detailed by Solomonik and Hoefler~\cite{2015arXiv151200066S}.

\subsection{Data Distribution Management}

CTF enables the user to work obliviously of the data distribution of matrices.
Each created matrix is distributed over all processors using a processor grid
that makes the block dimensions owned by each processor as close to a square as possible.
For each operation (e.g., sparse matrix multiplication),  
CTF seeks an optimal processor grid, considering the space
of algorithms described in Section~\ref{sec:par_spmspm} as well as overheads, such as 
redistributing the matrices.

Transitioning between processor grids and other data distributions are achieved using
three kernels: (1) block-to-block redistribution, (2) dense-to-dense redistribution, (3) sparse-to-sparse redistribution.
Kernel (1) is used for reassigning blocks of a dense matrix to processors on a new grid,
(2) is used for redistributing dense matrices between any pair of distributions, and (3)
is used for reshuffling sparse matrices and data input.
After redistribution, the matrix/tensor data is transformed to a format suitable for summation, multiplication, or contraction.
For dense matrices, this involves only a transposition, 
but for sparse matrices, CTF additionally converts data stored as index--value pairs (coordinate format) to a compressed-sparse-row (CSR)
matrix format.

CTF uses BLAS~\cite{lawson1979basic} routines for blockwise operations whenever possible (for the datatypes and algebraic operations provided by BLAS).
We additionally use the Intel MKL library 
to multiply sparse matrices, with 
three variants: one sparse operand, two sparse operands, and two sparse operands with a sparse output.
Substitutes for these routines are provided in case MKL is not available.
Further, for special algebraic structures or mixed-type contractions, general unoptimized 
blockwise multiplication and summation routines are used.
Users can also provide manually-optimized routines for blockwise operations, 
which we do not leverage for MFBC.
%

CTF predicts the cost of communication routines, redistributions, and blockwise operations based
on linear cost models.
Besides latency $\alpha$ and bandwidth $\beta$, CTF also considers the
memory bandwidth cost and computation cost of redistribution and blockwise operations.
The dimensions of the submatrices on which all kernels are executed for a given mapping can be
derived at low cost a priori.
To determine sparsity of blocks, we scale by either the nonzero fraction of the operand matrix or the estimated nonzero fraction of the output matrix.
Automatic model tuning allows the cost expressions of different kernels to be comparable 
on any given architecture.
CTF employs a model tuner that executes a wide set of benchmarks on a range of processors, designed to make use of all kernels for various input sizes.
Tuning is done once per architecture or whenever a kernel is added or significantly modified.

%

\section{Evaluation}
\label{sec:exp}

\begin{figure*}[t]
\vspace{-1.5em}
\centering
\subfigure[MFBC on largest SNAP graphs]{
\includegraphics[width=2.3in]{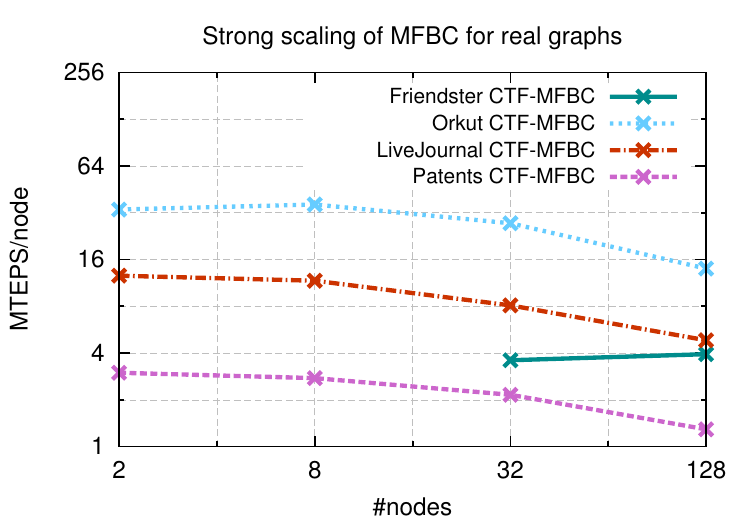}
\label{fig:ctf_snap}
}%
\subfigure[CombBLAS on largest SNAP graphs]{
\includegraphics[width=2.3in]{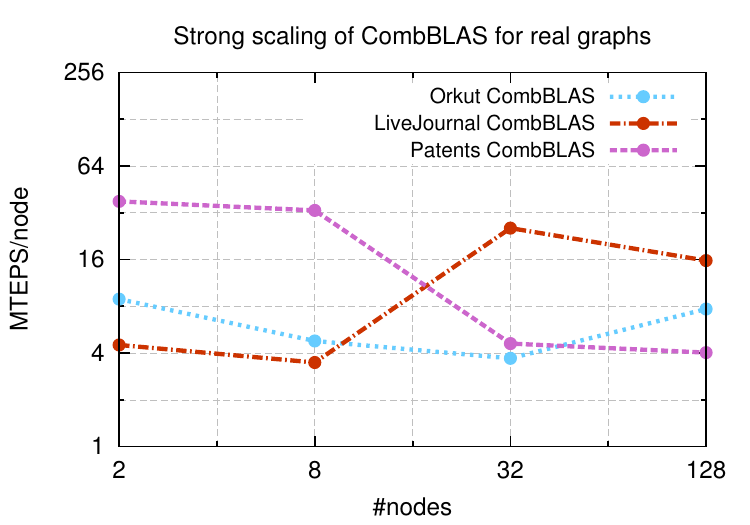}
\label{fig:combblas_snap}
}%
\subfigure[Weighted and unweighted R-MAT graphs]{
\includegraphics[width=2.3in]{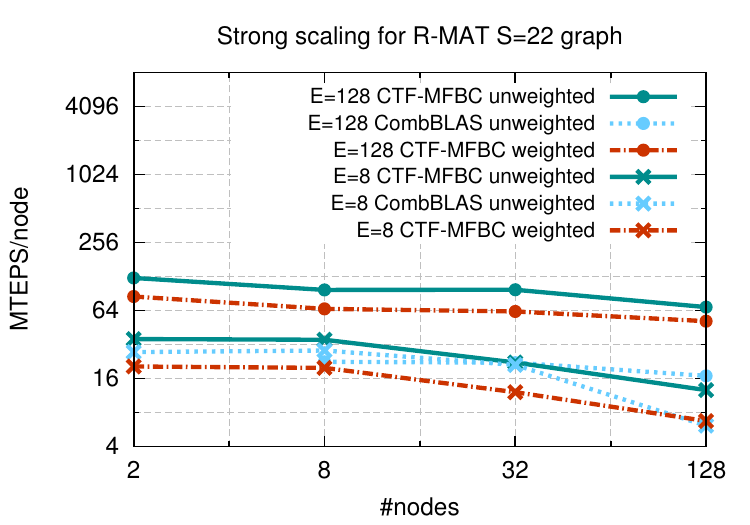}
\label{fig:combblas_ctf_rmat}
}
\vspace{-1em}
\caption{Strong scaling of MFBC and CombBLAS real graphs (Table~\ref{tab:graphs}) and for R-MAT graphs, which have roughly $2^S$ vertices and average degree $E$.
Weights are selected randomly between 1 and 100 for weighted R-MAT graphs in Figure~\ref{fig:combblas_ctf_rmat}}
\label{fig:ss}
\vspace{-1em}
\end{figure*}

We present performance results for the the CTF (unmodified v1.4.2) implementation of MFBC (CTF-MFBC) and for CombBLAS.
Our benchmarks use the CombBLAS betweenness centrality code and benchmark included in the CombBLAS v1.5.0 distribution.
on the Blue Waters Cray XE6 supercomputer.
We provide a mix of performance benchmark tests, working with different types of graphs:
\begin{enumerate}
\item real-world social-network and citation graphs~\cite{snapnets},
\item synthetic weighted and unweighted R-MAT graphs~\cite{chakrabarti2004r},
\item Erd\H{o}s-R\'{e}nyi random graphs~\cite{gilbert1959random}.
\end{enumerate}
Real-world graphs and R-MAT graphs serve to provide effective strong scaling experiments.
We consider different types of weak scaling experiments using uniform graphs.
For {\it edge-weak scaling}, we keep the number of edges per processor and the nonzero fraction constant.
For {\it vertex-weak scaling}, we keep the number of vertices per processor and the average degree constant.

CTF-MFBC achieves consistent scalability patterns for all types of graphs.
It outperforms CombBLAS on various tests, but not uniformly so.
Large speed-ups are achieved for real, R-MAT, and uniform random graphs with the average vertex degree above 100.

\subsection{Experimental Setup}
To debug and tune our code, we used the NERSC Edison supercomputer, a Cray XC30 as well as the CSCS Piz Dora machine, a Cray XC40.
Each Edison compute node has two 12-core HT-enabled Intel Ivy Bridge sockets
with 64 GiB DDR3-1866 RAM.
Each node of Piz Dora has two 18-core Intel Broadwell CPUs (Intel® Xeon® E5-2695 v4). 
The network is the same on both of these machines, a Cray's Aries implementation of the Dragonfly
topology~\cite{dally08}. 
We then collected our final set of benchmarks on Blue Waters, a Cray XE6 supercomputer.
Each Blue Waters XE node has two 16-core AMD 6276 Interlagos sockets; there are 22,500 nodes in total. 
The network is a Cray Gemini torus.
The performance-portability of CTF made it easy to transition the code between these machines.
Our choice of machines was based on resource availability.

For both CTF and CombBLAS, we benchmarked a range of batch-sizes for each graph and processor count.
We show the highest performance rate over all batch sizes, which was usually achieved by the largest
batch-size that still fit in memory.
Such batches run for on the order of minutes, so we executed each batch
only once, rather than testing many iterations.
Due to the large overall time-granularity of the benchmarks,
we observed that system noise did not effect the runtime in the first 2 significant digits.


We use the metric of edge traversals per second (TEPS) to quantify performance.
The number of edge traversals scales with the size of the graph.
For betweenness centrality on a connected unweighted 
%
%
graph, each edge is traversed to consider shortest paths from every starting node.
We use all one MPI process per node and benchmark on core counts that are powers of four,
as CombBLAS requires square processor grids.
CTF can leverage threading and execute efficiently on most core-counts, but we maintain
powers of four for all experiments for consistency and simplicity.

\macb{Considered graphs}
We used two classes of synthetic graphs: R-MAT (power-law) and random-uniform.
We varied the density for both types of synthetic graphs.
Generally these graphs have a low diameter, which roughly reflects social-network graphs, 
a key application domain of betweenness centrality.
We also use real-world SNAP~\cite{snapnets} graphs (Table~\ref{tab:graphs}) of various sparsities and diameters.
%
%
Our CTF-MFBC code preprocessed all graphs to remove completely disconnected vertices.


\begin{table}[h]
\centering
\footnotesize
\sf
\begin{tabular}{@{}l|llllll@{}}
\toprule
\multicolumn{1}{c}{\textbf{ID}} & \multicolumn{1}{c}{\textbf{Name}} & directed? & \multicolumn{1}{c}{$n$} & \multicolumn{1}{c}{$m$} & \multicolumn{1}{c}{$d$}  & \multicolumn{1}{c}{$\bar{d}$} \\ \midrule
frd & Friendster              & undirected & 65.6M & 1.8B  & 32 & 5.8 \\
ork & Orkut social network		& undirected & 3.1M  & 117M  & 9  & 4.8 \\
ljm & LiveJournal membership	& directed   & 4.8M  & 70M   & 16 & 6.5 \\
cit & Patent citation graph		& directed   & 3.8M  & 16.5M & 22 & 9.4 \\
\bottomrule
\end{tabular}
\caption{The analyzed real-world graphs (sorted by ${m}$). All graphs are unweighted and have diameter $d$, with 90-percentile effective diameter $\bar{d}$~\cite{snapnets}.}
\label{tab:graphs}
\end{table}

\begin{figure*}[t]
\centering
\subfigure[Constant $n^2/p=n_0^2$ and edge percentage $f=100\cdot  m/n^2$]{
\includegraphics[width=3.3in]{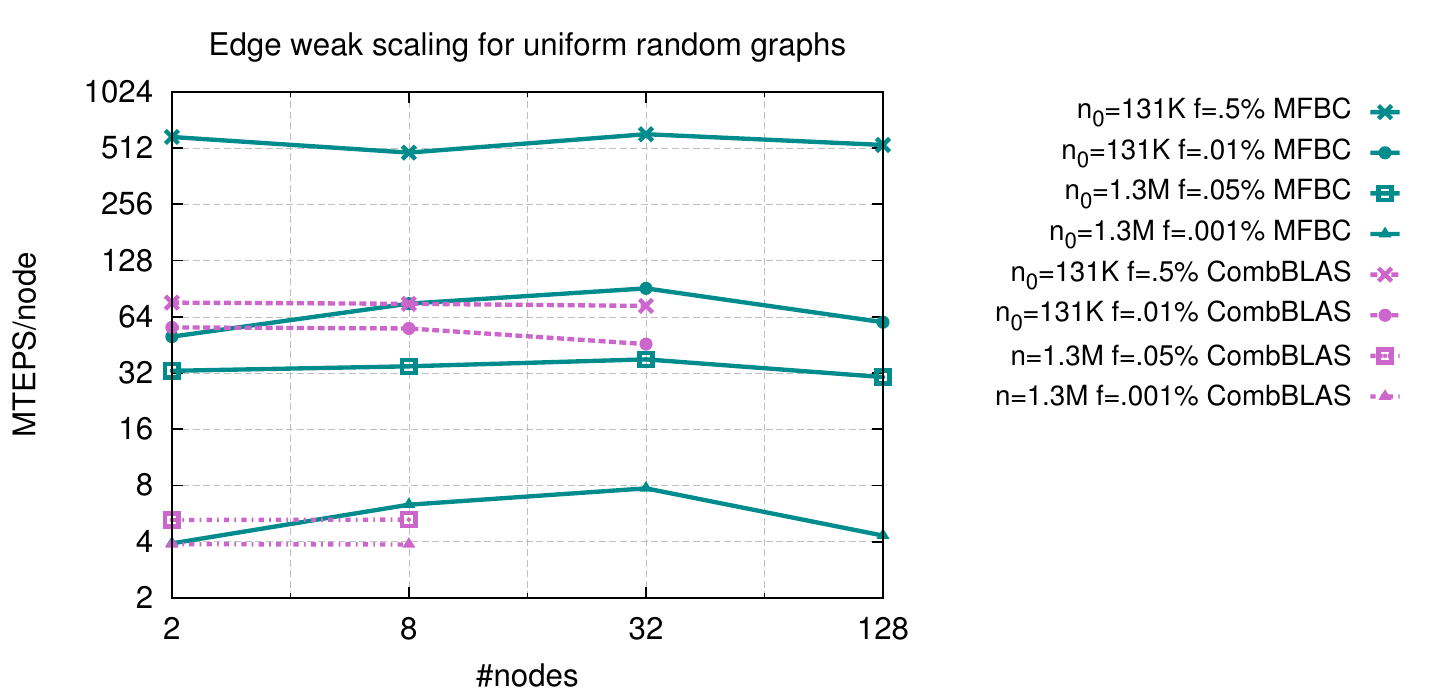}
\label{fig:ws_mf_vs_cb}
}
\subfigure[Constant $n/p=n_0$ and vertex degree $k=m/n$]{
\includegraphics[width=3.3in]{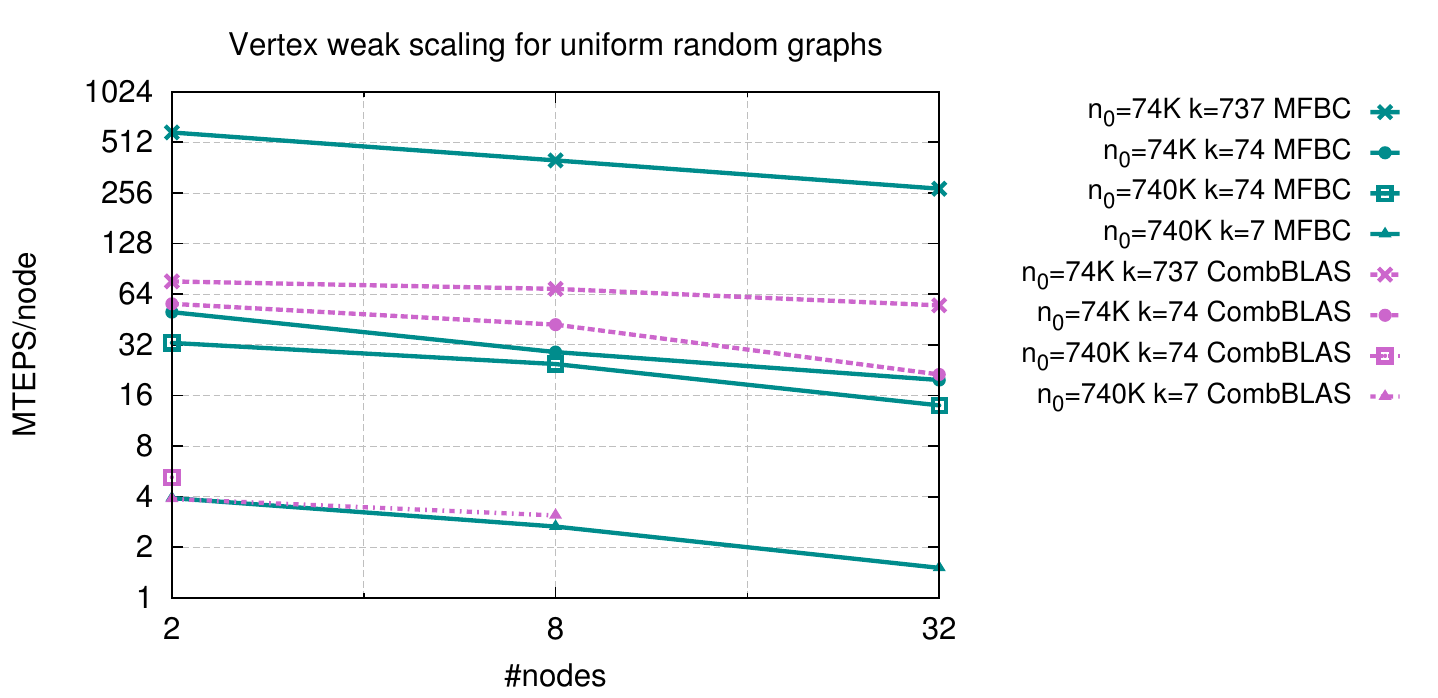}
\label{fig:ws2_mf_vs_cb}
}
\vspace{-1em}
\caption{Weak scaling of MFBC and CombBLAS for uniform random graphs}
\vspace{-1em}
\end{figure*}
\vspace{-3em}
\subsection{Strong Scaling}

We begin our performance study by testing the ability of MFBC to lower time to solution by using extra nodes (strong scaling).
Our strong scaling experiments evaluate MFBC with respect to CombBLAS on both real-world and synthetic graphs.

\macb{Real-world graphs}
Our selection of graphs (Table~\ref{tab:graphs}) considers three social-network graphs and a patent citation graph.
Two of the graphs are directed and two are undirected.
The diameter of the Orkut and LiveJournal graphs is relatively small, but Orkut is significantly denser.
The patent citation graph has the largest diameter, posing the biggest challenge to betweenness centrality computation.
The Friendster graph has roughly 15X more vertices and edges than any of the other graphs, and a fairly large diameter.

CTF-MFBC performs best for the Orkut graph, as the sparse matrix multiplications are more computation-intensive for denser graphs with low diameter (the latter implying denser frontiers).
The somewhat larger diameter of LiveJournal and the significantly larger diameter of the patent citation graph take a toll on the absolute performance.
However, the strong scalability is reasonable for all three of these graphs, as speed-ups of around 30X are achieved for each, when using 64X more nodes.
The smallest number of nodes on which CTF-MFBC successfully executes the Friendster graph is 32, with good scalability to 128 nodes.
The graph is the largest social network available in the SNAP dataset.

On the other hand, we observed relatively volatile performance for these graphs for CombBLAS, shown in Figure~\ref{fig:combblas_snap}.
In absolute terms the performance of CombBLAS on LiveJournal and the patent citation graph compare well to CTF-MFBC.
On the other hand, we were unable to successfully execute the Friendster graph with CombBLAS.
Further, CTF-MFBC is up to 7.6X faster for the Orkut graph.
We don't have a good understanding of the radical change in performance of the CombBLAS benchmark (in different directions for different graphs) when transitioning from 8 to 32 nodes.
On 32 nodes, CombBLAS reports a 10X improvement performance for LiveJournal by using a batch size of 8192 rather than 512, while for Orkut the corresponding improvement is only 3X.

\macb{R-MAT graphs}
We work with two R-MAT graphs, for both of which $\log_2(n) \approx S =22$, while the average degree is controlled by $k\approx E \in \{8,128\}$.
Disconnected vertices are removed in CTF-MFBC and skipped for consideration as starting vertices in CombBLAS.
R-MAT graphs have a low diameter, so a small number of matrix products is done in the unweighted case.

Figure~\ref{fig:combblas_ctf_rmat} compares CTF-MFBC with CombBLAS for strong scaling on these R-MAT graphs.
We omit points where we were not able to get CombBLAS to successfully execute betweenness centrality on the graph.
Overall, the performance is roughly the same for the case with small average degree E=8.
However, CTF-MFBC performs significantly better when E=128.

Figure~\ref{fig:combblas_ctf_rmat} also compares CTF-MFBC performance for R-MAT graphs with edge weights
randomly selected as integers in the range $[1,100]$ versus unweighted R-MAT.
In these tests, the number of sparse matrix multiplications doubles and the frontier stays relatively
dense for several steps of Algorithm~\ref{alg:MFBF}, thus the overall performance of MFBC decreases by more than a factor of two with the inclusion of weights.

\vspace{-0.5em}
\subsection{Weak Scaling}

We now test CTF-MFBC's parallel scalability, while keeping $m/p$ constant (weak scaling).
We use uniform random graphs, in which all nodes have the same
expected vertex degree, and every edge exists with a uniform probability.
We consider ``edge weak scaling'' where $n^2/p$ is kept constant and
``vertex weak scaling'' where $n/p$ is kept constant.
CTF-MFBC achieves good edge weak scaling, but deteriorates in efficiency for vertex weak scaling,
a discrepancy justified by our theoretical analysis.

Figure~\ref{fig:ws_mf_vs_cb} provides ``edge weak scaling'' results, in which the
sparsity percentage of the adjacency matrix, $f=100\cdot m/n^2$, stays constant.
The data confirms the observation that MFBC performs best for denser graphs.
CTF-MFBC scales well, which is expected, since the
communication cost term $O(\beta\cdot n^2/\sqrt{cp})$ grows in proportion with $\sqrt{p}$,
while the amount of computation per node $O(mn/p)$ also grows in proportion with $\sqrt{p}$.

Figure~\ref{fig:ws2_mf_vs_cb} provides ``vertex weak scaling'' results, in which the
the vertex degree $k$ stays constant.
We were unable to get CombBLAS to execute successfully on 64 nodes for the graphs with $n=740K$ vertices.
CTF-MFBC performs again better than CombBLAS when the average degree of the graph is large, but
both implementations deteriorate in performance rate with increasing node count.
This deterioration is predicted by our communication cost analysis, since in this weak scaling mode, the
term $O(\beta\cdot n^2/\sqrt{cp})$ grows in proportion with $p^{3/2}$, while the amount of work
per node $O(mn/p)$ grows in proportion with $p$.
Therefore, unlike edge weak scaling, vertex weak scaling is not sustainable, the number of words
communicated per unit of work grows with $\sqrt{p}$.

\subsection{Communication Cost}

We experimentally measured the communication complexity incurred by CTF-MFBC and CombBLAS.
Communication in both codes is dominated by collective communication routines.
We profiled  the time spent in communication (excluding time to synchronize) as well as used the parameters passed to MPI routines to build an analytical model.
To get the critical path costs, we follow the communication pattern: for each collective over a set of processors, we maximize the critical path costs incurred by those processors so far.
Broadcast and reduce of a message of size $n$ over $p$ processors have the cost $2n\cdot \beta+2\log_2(p)\cdot \alpha$ (twice that of scatter and allgather).
At the end of execution, we consider the maximum over all processors for each cost, thus obtaining the greatest amount of data communicated along any dependent sequence of collectives, as well
as the greatest amount of messages communicated along any (possibly different) dependent sequence of collectives.

\begin{table}[h]
{\small
\begin{tabular}{r|r|c|c|c|c|c}
graph & code & $W$ (GB) & $S$ (\#msgs) & comm (sec) & total (sec) \\
\hline
Orkut & CombBLAS & 19.71 & 546.9K & 48.81 & 233.7 \\
Orkut    & CTF-MFBC & 7.010 & 184.6K & 46.95 & 111.6 \\
LiveJournal & CombBLAS & 10.21 & 190.6K & 54.03 & 238.5 \\
LiveJournal & CTF-MFBC & 8.794 & 94.69K & 54.00 & 100.2 \\
Patents & CombBLAS & 1.026 & 202.8K & 4.084 & 6.422 \\
Patents & CTF-MFBC & 3.900 & 35.32K & 24.05 & 60.53 
\end{tabular}
}
\caption{Critical path times and costs collected on 4096 cores of Blue Waters, all for a single batch of 512 starting vertices. Total time includes overhead of additional profiling barriers.}
\label{tab:comm_res}
\end{table}
\vspace{-.2in}
Table~\ref{tab:comm_res} shows that CTF-MFBC uses fewer messages and performs less collective communication than CombBLAS in some cases.
In various other cases, CombBLAS performed less communication.
For the patent citations graph, CombBLAS performs significantly faster than CTF.
The blocked layout and choice of starting vertices likely permits CombBLAS to exploit locality for this directed graph.
Further, the back-propagation stage for CTF-MFBC (which is dynamically computed, rather than stored from BFS) takes considerably longer (performing more work in the directed case).

Overall, we can conclude that either the CombBLAS or CTF-MFBC implementation may incur more communication, depending on the type of graph.
CTF-MFBC seems to require less for denser graphs (e.g. Orkut and uniform random).
Persistence of layout and more accurate performance models would further reduce communication costs.

\section{Conclusion}
\label{sec:conc}

Our new maximal frontier algorithm for betweenness centrality achieves
good parallel scaling due to its low theoretical communication complexity
and a robust implementation of its primitive operations.
The algebraic formalism we use for propagating information through graphs
enables intuitive expression of frontiers and edge relaxations, making
it extensible to other graph problems such as maximum flow.
We expect that the approach of selecting frontiers to maximize overall progress
also leads to good parallel algorithms for other graph computations.

By implementing MFBC on top of Cyclops Tensor Framework (CTF), we have introduced the first application
case-study of CTF for a non-numerical problem.
MFBC with CTF shows substantial improvements in performance over CombBLAS for some graphs, while additionally being general to weighted graphs.
Automatic parallelism for sparse tensor contractions with arbitrary algebraic structures
is useful in many other application contexts.
The communication-efficiency achieved by sparse matrix multiplication routines in CombBLAS and CTF has 
a promising potential for changing the way massively-parallel graph computations are done.

\section{Acknowledgements}
This research is part of the Blue Waters sustained-petascale computing project, which is supported by the National Science Foundation (awards OCI-0725070 and ACI-1238993) and the state of Illinois. Blue Waters is a joint effort of the University of Illinois at Urbana-Champaign and its National Center for Supercomputing Applications.
We thank Hussein Harake, Colin McMurtrie, and the whole CSCS team granting access to the Piz Dora machine and for their excellent technical support. 
Maciej Besta was supported by Google European Doctoral Fellowship
and Edgar Solomonik was supported by an ETH Zurich Postdoctoral Fellowship.

\bibliographystyle{ACM-Reference-Format}
\bibliography{paper}

\end{document}